\newcommand{\eps}{\varepsilon}
\newcommand{\MST}{\mathit MST}
\newcommand{\SMT}{\mathit SMT}
\newcommand{\pp}{\ensuremath{k}}
\newcommand{\etal}{\emph{et~al.}}
\definecolor{purple}{hsb}{.77,1,1}
\newcommand{\niceremark}[3]{\textcolor{red}{\textsc{#1 }}\textcolor{blue}{\textsc{#2: }}\textcolor{purple}{\textsf{#3}}}
\newcommand{\rodrigo}[2][says]{\niceremark{Rodrigo}{#1}{#2}}
\newcommand{\marc}[2][says]{\niceremark{Marc}{#1}{#2}}
\let\doendproof\endproof
\renewcommand\endproof{~\hfill$\square$\doendproof}
\renewcommand{\paragraph}[1]{\medskip\noindent\textbf{#1.}}
\title{Colored Spanning Graphs for Set Visualization\thanks{
A preliminary version of this paper appeared in the proceedings of the {\em 21st International Symposium on Graph Drawing}~\cite{hkvlsss-csgsv-13}.}}
\author
{
  Ferran Hurtado\inst{1}
  \and Matias Korman\inst{2}
  \and Marc van Kreveld\inst{3}
  \and Maarten L\"offler\inst{3}
  \and Vera Sacrist\'an\inst{1}
  \and \\ Akiyoshi Shioura\inst{4}
  \and Rodrigo I. Silveira\inst{1}
  \and Bettina Speckmann\inst{5}
  \and Takeshi Tokuyama\inst{2}
  }
\institute{Dept. de Matem\`atiques, Universitat Polit\`ecnica de Catalunya, Spain \email{\{vera.sacristan|rodrigo.silveira\}@upc.edu}
\and Tohoku University, Japan \email{\{mati|tokuyama\}@dais.is.tohoku.ac.jp}
\and Dept. of Computing and Information Sciences, Utrecht University, the Netherlands \email{\{m.j.vankreveld|m.loffler\}@uu.nl}
\and Dept. of Social Engineering, Tokyo Institute of Technology, Japan \email{shioura.a.aa@m.titech.ac.jp}
\and Dept. of Mathematics and Computer Science, Technical University Eindhoven, the Netherlands \email{b.speckmann@tue.nl}
}
\begin{document}

\maketitle

\begin{abstract}
We study an algorithmic problem that is motivated by ink minimization for sparse set visualizations. Our input is a set of points in the plane which are either blue, red, or purple. Blue points belong exclusively to the blue set, red points belong exclusively to the red set, and purple points belong to both sets.
A \emph{red-blue-purple spanning graph} (RBP spanning graph) is a set of edges connecting the points such that
the subgraph induced by the red and purple points is connected,
and the subgraph induced by the blue and purple points is connected.

We study the geometric properties of minimum RBP spanning graphs and the algorithmic problems associated with computing them. Specifically, we show that the general problem can be solved in polynomial time using matroid techniques. In addition, we discuss more efficient algorithms for the case in which points are located on a line or a circle, and also describe a fast $(\frac 12\rho+1)$-approximation algorithm, where $\rho$ is the Steiner ratio.\end{abstract}

\setcounter{footnote}{0}
\subsection*{In memoriam: Ferran Hurtado (1951--2014)}
This paper was initiated during a research visit hosted by Ferran and his group in Barcelona. Following tradition, hours of research were complemented by relaxing evenings with great food and wine. The authors would like to thank Ferran for being a supportive mentor, an inspiring colleague, and a great friend.

\section{Introduction}

Visualizing sets and their elements is a recurring theme in information visualization (see the recent state-of-the-art report by Alsallakh \etal~\cite{Alsallakh2014}). Sets arise in many application areas, as varied as social network analysis (grouping individuals into communities), linguistics (related words), or geography (related places). Among the oldest representations for sets are Venn diagrams~\cite{Edwards2004Cogwheels} and their generalization, Euler diagrams. These representations are natural and effective for a small number of elements and sets. However, for larger numbers of sets and more complicated intersection patterns more intricate solutions are necessary. The last years have seen a steady stream of developments in this direction, both for the situation where the location of set elements can be freely chosen and for the important special case that elements have to be drawn at particular fixed positions (for example, restaurant locations on a city map).

Our paper is motivated by some recent approaches that use very sparse enclosing shapes when depicting sets. LineSets~\cite{alper11} are the most minimal of all, reducing the geometry to a single continuous line per set which connects all elements. Both Kelp Diagrams~\cite{kelp} and its successor KelpFusion~\cite{kelpfusion} are based on sparse spanning graphs, essentially variations of minimal spanning trees for different distance measures. These methods attempt to reduce visual clutter by reducing the amount of ``ink'' (see Tufte's rule~\cite{tufte}) necessary to connect all elements of a set. However, although the results are visually pleasing, neither method does use the optimal amount of ink. In this paper we explore the algorithmic questions that arise when computing spanning graphs for set visualization which are optimal with respect to ink usage.

\smallskip\noindent
{\bf Problem statement.} Our input is a set of $n$ points in the plane. Each point is a member of one or more sets. We mostly consider the case where there are exactly two sets, namely a red and a blue set. A point is red if it is part only of the red set and it is blue if it is part only of the blue set. A point that is part of both the red and the blue set is purple.

\begin{wrapfigure}[10]{r}{.49\textwidth}
\centering
\vspace{-1.5\baselineskip}
\includegraphics{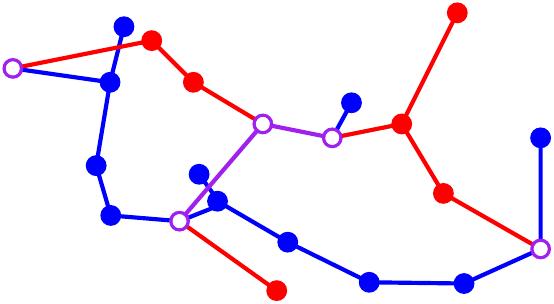}
\vspace{-.25\baselineskip}
\caption{A minimum RBP spanning graph.}
\label{fig:firstexample}
\end{wrapfigure}
A \emph{red-blue-purple spanning graph} (RBP spanning graph) for a set of points that are red, blue, and purple is a set of edges connecting the points such that
the subgraph induced by the red and purple points is connected,
and the subgraph induced by the blue and purple points is connected (see Figure~\ref{fig:firstexample}).
A \emph{minimum RBP spanning graph} for a set of points
that are red, blue, and purple is a red-blue-purple spanning graph
that has minimum weight (total edge length).
In this paper we consider the algorithmic problems associated with computing minimum RBP spanning graphs.

\smallskip\noindent
{\bf Results and organization.} We first review related work. In Section~\ref{sec:properties} we describe and prove various useful properties of (minimum) RBP spanning graphs. Then, in Section~\ref{sec:linecircle}, we consider the two special cases where the points are located on a line or on a circle. This setting is meaningful if the elements of the sets are not associated with a specific location (for example, social networks or software systems). Here visualizations frequently choose to arrange elements in simple configurations such as lines or circles.
We give an $O(n)$ time algorithm for points on a line, assuming that the input is already sorted. For points on a circle we exploit a structural result which allows us to use dynamic programming in $O(k^3 + n)$ time, where $k$ is the number of purple points. In Section~\ref{sec:approx} we turn to approximations. We describe an $O(n \log n)$ time algorithm that computes a $(\frac 12\rho+1)$-approximation of the minimum RBP spanning graph, where $\rho$ is the Steiner ratio. In Section~\ref{sec:matroid} we give a general algorithm for computing a minimum RBP spanning graph. The algorithm runs in $O(n^6)$ time and is based on matroid optimization techniques.\footnote{We note that in the conference version of this paper~\cite{hkvlsss-csgsv-13}, we claimed that the problem was NP-hard. However, our NP-hardness reduction turned out to be incorrect.} Finally, in Section~\ref{sec:extensions} we discuss various extensions for situations with more than two sets.

\smallskip\noindent
{\bf Related work.} In recent years a number of papers explored the problem of automatically drawing Euler diagrams, for example, Simonetto and
Auber~\cite{Simonetto2008UndrawableEulerDiagrams}, Stapleton
\etal~\cite{Stapleton2011InductivelyGeneratingEulerDiagrams}, and Henry Riche and Dwyer~\cite{Riche2010UntanglingEulerDiagrams}. These methods assume that the locations of the set elements can be freely chosen. An important variation is the case that elements have to be drawn at fixed positions. Collins \etal~\cite{bubblesets} presented \emph{Bubble Sets}, a method based on isocontours.  A similar approach was suggested by Byelas and Telea~\cite{Byelas2008AreasOfInterest}.
\emph{LineSets}, by Alper \etal~\cite{alper11}, attempt to improve the overall readability by the minimalist approach of drawing a single line per set. Dinkla \etal~\cite{kelp} introduced \emph{Kelp Diagrams}, which use a sparse spanning graph, essentially a minimum spanning tree with some additional edges. Kelp Diagrams were extended by Meulemans \etal~\cite{kelpfusion} to a hybrid technique named \emph{KelpFusion} which uses a mix of hulls and lines to generate fitted boundaries.

Sets defined over points in the plane can be interpreted as an
embedding of a \emph{hypergraph} where the points are vertices and each set is
a hyperedge connecting an arbitrary number of vertices. Drawings of hypergraphs have been
discussed in several papers, for example, 
by Kaufmann \etal~\cite{kaufmann09}, Buchin \etal~\cite{Buchin2011PlanarSupports}, and
 Brandes \etal~\cite{Brandes2010PathBasedSupportsJ}.

Also in the area of discrete and computational geometry
many problems on colored point sets have been studied. Possibly the most famous example is the \emph{Ham-Sandwich Theorem}: given a set of $2n$ red points and $2m$ blue
points so that no three points are aligned, there is always a line $\ell$ such that each open halfplane bounded by $\ell$ contains exactly $n$ red points and $m$ blue points.
There have been many variations on this theorem
and also many other results on finding configurations or geometric graphs with  constraints depending on colors. We refer the interested reader to the survey by Kaneko and Kano~\cite{KK03}, and to Chapter 8 in the collection of research problems \cite{BMP05}.

From an algorithmic point of view, many problems have been considered, here we mention only a few of them. The \emph{bichromatic closest pair} (e.g., see Preparata and Shamos~\cite{Pre85}~Section 5.7, Agarwal \etal~\cite{AES91}, and Graf and Hinrichs \cite{GH93}), the \emph{chromatic nearest
neighbor search} (see Mount et al. \cite{MNSW00}), the problems on finding \emph{smallest color-spanning objects} (see Abellanas \etal~\cite{AHIKLMPS01}),
the \emph{colored range searching} problems (see Agarwal \etal~\cite{AGM02}), and the \emph{group Steiner tree} problem where, for
a graph with colored vertices, the objective is to find a minimum weight subtree
that covers all colors (see Mitchell \cite{Mitchell98}, Section 7.1).
Borgelt \etal~\cite{BKLLMSV09} discuss computing planar red-blue minimum spanning trees where edges may connect red and blue points only, and the tree must be planar.
Finally, Tokunaga~\cite{Tokunaga1996} considers a set of red or blue points in the plane and computes two geometric spanning trees of the blue and the red points such that they intersect in as few points as possible.

\section{Properties of RBP spanning graphs}\label{sec:properties}

For any set $S$ of points, let $MST(S)$ denote a Euclidean minimum spanning tree of $S$. Given a graph $G$ on vertex set $S$, and a set $S'\subseteq S$, we denote by $G[S']$ the subgraph of $G$ induced by $S'$. Given two graphs $G_1=(S_1,E_1)$, and $G_2=(S_2,E_2)$ with possibly different vertex sets, we define their intersection as $G_1\cap G_2=(S_1\cap S_2, E_1\cap E_2)$. Given three sets $R$, $B$, and $P$ of respectively red, blue, and purple points, let $G^*(R,B,P)$ denote a minimum RBP spanning graph. When the tuple $(R,B,P)$ is clear from the context, we simply denote it by $G^*$.

We say that an edge of $G^*$ is \emph{red} if it connects two red
points or a red and a purple point. We call an edge \emph{blue} if it connects
two blue points or a blue and a purple point. We call an edge \emph{purple} if it
connects two purple points. A minimum RBP spanning graph $G^*$ does not contain edges between a red and a blue point.
The subgraph induced by the red and purple points in a minimum
RBP spanning graph is a spanning tree (since otherwise we could remove an edge from $G^*$ and reduce its weight). Naturally, the analogous statement
holds for the blue and purple points. Figure~\ref{fig:spanningexamples} (a) illustrates the trade-off between red, blue, and purple edges in a minimum RBP spanning graph.

When there are no ties (e.g., if no two edges of the same color have the same length), every red edge in a minimum RBP spanning graph also occurs in a minimum spanning tree of only the red and purple points (since otherwise we could replace it by another red edge of smaller weight). The corresponding statement is true also for the blue edges, but not for the purple edges. That is, there can be purple edges in a minimum RBP spanning graph that do not occur in a minimum spanning tree of only the purple points. In fact, in the upcoming Proposition~\ref{prop:MartiniGlass} we show something stronger: we give a problem instance whose RBP spanning graph has a purple edge that crosses several other purple edges.

Just as with standard minimum spanning trees, the degree of the points in a minimum RBP spanning graph is bounded.

\begin{lemma}
The maximum degree of a point in a minimum RBP spanning graph is at most $18$.
\end{lemma}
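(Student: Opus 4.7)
The plan is to extend the classical Euclidean MST degree bound of $6$ separately to each color class of edges at a vertex $v$, then sum. Recall that the classical bound comes from the fact that any two edges of an MST meeting at a common vertex must form an angle of at least $60^\circ$ (otherwise a triangle-inequality swap would strictly shorten the tree), and in the plane at most $6$ unit vectors can be pairwise separated by at least $60^\circ$. I aim to show that among edges of one fixed color (red, blue, or purple) meeting at $v$, the same $60^\circ$-angle condition holds, and therefore each color contributes at most $6$ edges, for a total of $18$.

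For red edges (the blue case is symmetric), observe that if $v$ is red the claim is immediate, since every incident edge is red and the red-purple subgraph is a spanning tree. If $v$ is purple, every red edge at $v$ must go to a red neighbor, since an edge from $v$ to a purple vertex would itself be purple. So given two red edges $(v,a),(v,b)$ with angle less than $60^\circ$ at $v$ and $|ab|<|vb|$, the shortcut $(a,b)$ is again a red edge; moreover it cannot already lie in $G^*$, since together with $(v,a)$ and $(v,b)$ it would form a triangle inside the red-purple spanning tree. Replacing $(v,b)$ by $(a,b)$ keeps the red-purple subgraph a spanning tree, leaves the blue-purple subgraph untouched, and strictly decreases the total weight, contradicting minimality.

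For purple edges the exchange is more delicate because a purple edge lies in both spanning subtrees. Suppose $(v,a)$ and $(v,b)$ are purple edges at $v$ with angle less than $60^\circ$ and $|ab|<|vb|$. Both $a$ and $b$ are purple, so the shortcut $(a,b)$ is purple as well. Deleting $(v,b)$ splits both spanning trees into two components; in each of them the surviving edge $(v,a)$ keeps $a$ on the same side as $v$, while $b$ ends up on the other side. Adding $(a,b)$ therefore simultaneously reconnects both spanning trees, producing a valid RBP spanning graph of smaller weight, a contradiction. Hence at most $6$ purple edges are incident to $v$, and summing over the three color classes yields the degree bound of $18$.

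The main obstacle is indeed the purple case: unlike the red and blue swaps, it modifies both spanning subtrees at once, and one has to verify that a single exchange suffices to repair both. The resolution hinges on the fact that purple edges belong to both subtrees, so the companion edge $(v,a)$ witnesses the same component structure in each of them, and the shortcut $(a,b)$ crosses both cuts simultaneously.
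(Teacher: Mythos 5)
Your proof is correct and follows essentially the same approach as the paper: both arguments reduce the bound to ``at most $6$ edges per edge-color class'' and invoke the $60^\circ$-angle property underlying the degree-$6$ bound for Euclidean minimum spanning trees. The only difference is mechanical---the paper pigeonholes to $7$ same-colored edges and replaces the resulting star by the MST of those $8$ points wholesale, whereas you perform a single local edge exchange; your explicit verification that one exchange simultaneously repairs both spanning subtrees in the purple case is a detail the paper leaves implicit.
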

\begin{proof}
We now proceed to prove the statement by contradiction. Assume that there exists a problem instance with a solution $G^*$ in which a point $p$ is adjacent to more than $18$ points. By the pigeonhole principle, we conclude that $p$ is adjacent to at least $7$ points of the same color. Let $N$ be the set of such vertices, and observe that the edges in $G^*[N \cup \{p\}]$ form a star. Thus, if we replace these edges by those in $MST(N \cup \{p\})$ we obtain a different graph that spans exactly the same vertices and has smaller weight.
\end{proof}

We also note that the above bound is tight: let $p$ be a purple point located in the center of a regular hexagon with radius $3$ and two more regular hexagons with radius $1$, one slightly rotated. Place a purple point on each corner of the large hexagon, place red points on the corners of one of the smaller hexagons, and blue points on the corners of the other one.
Then the star graph with $p$ at the center (with degree 18) is a minimum RBP spanning graph (see Figure~\ref{fig:spanningexamples}~(c)).

It is worth mentioning that the degree-$18$ vertex in the previous example can be avoided, since the same point set has other RBP spanning graphs where the maximum degree is $15$.
In fact, a degree higher than $15$ is never necessary: even if we allow degeneracies, there is a way to break ties so that no vertex of a minimum spanning tree has degree larger than $5$~\cite{Wu20061}. Thus, if a vertex is adjacent to more than $5$ vertices of the same color, we can remove one of the edges and replace it by one of equal or smaller length, preserving connectivity. Similarly, we can show that a red or blue point can have degree $6$ in a minimum RBP spanning graph, but for any point set there exists a minimum RBP spanning graph in which no red or blue point has degree larger than $5$.
At the same time, a vertex of degree $15$ may be sometimes necessary:
if we replace the hexagons in the previous construction by pentagons, we obtain a similar construction whose \emph{unique} minimum RBP spanning graph has a point of degree $15$.

It is easy to see that a minimum RBP spanning graph is not necessarily planar.
Red and blue edges are mostly independent and they can easily cross. Moreover, a red and a purple edge can cross, a blue and a purple edge can cross, and even two purple edges can cross in a minimum RBP spanning graph, as shown in Figure~\ref {fig:spanningexamples} (b). In fact, a single purple edge can cross arbitrarily many purple edges. 

\begin{figure}[t]
\centering
\includegraphics{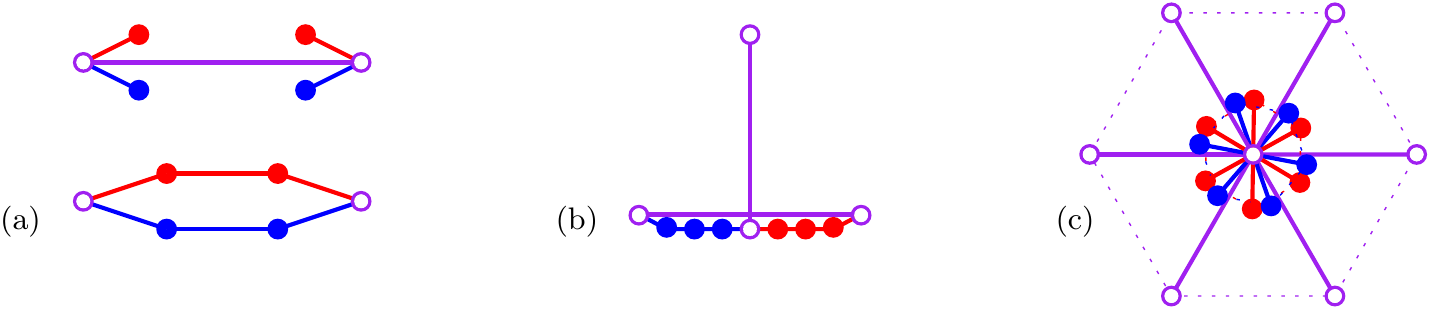}
\caption{(a) Two examples of minimum RBP spanning graphs of similar
configurations of points.
(b) A minimum RBP spanning graph with a purple edge crossing. (c) A minimum RBP spanning graph with a vertex of degree 18.}
\label{fig:spanningexamples}
\end{figure}

\begin{proposition}
\label{prop:MartiniGlass}
A purple edge in an optimal RBP spanning graph can cross $\Theta(n)$ other purple edges.
\end{proposition}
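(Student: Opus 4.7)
I propose the following plan. The statement consists of two parts: a linear upper bound on the number of crossings and a matching linear lower bound realised by some instance.

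For the upper bound, I would use the observation, already established in this section, that $G^*[R\cup P]$ and $G^*[B\cup P]$ are spanning trees of $R\cup P$ and $B\cup P$ respectively. Because $G^*$ has no red-blue edges, every edge of $G^*$ lies in at least one of these two subgraphs. Summing their sizes gives $|E(G^*)|\le(|R|+|P|-1)+(|B|+|P|-1)\le 2n-2$, so a single purple edge is crossed by at most $2n-3$ other edges, which is $O(n)$.

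For the lower bound the plan is to exhibit a family of $n$-point instances in which the optimum contains one long purple edge $e=ab$ crossed by $\Omega(n)$ other purple edges. I would take $a=(-L,0)$ and $b=(L,0)$ and attach to $a$ a small red-only cluster and to $b$ a small blue-only cluster, positioned so that $a$ and $b$ are forced pivots of their respective colour subtrees. The remaining points form a ``comb'' of $k=\Theta(n)$ purple pairs $u_i=(x_i,+\eta)$ and $v_i=(x_i,-\eta)$ with tiny $\eta$ and distinct $x_i$'s in a narrow horizontal range around $x=0$, together with one dedicated red point glued to each $u_i$ and one dedicated blue point glued to each $v_i$. Each vertical edge $u_iv_i$ crosses $e$, and, because it is purple, it can simultaneously serve the connectivity of $R\cup P$ (attaching the red point at $u_i$ to the rest of its subtree) and of $B\cup P$ (attaching the blue point at $v_i$ to the rest of its subtree); tiny $\eta$ makes every alternative for these two purposes more expensive, so all $k$ vertical edges are in $G^*$. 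The long edge $ab$ itself is forced by spreading the $x_i$ over a range much smaller than $L$, so that any route from the red cluster at $a$ to the red cluster at $b$ which detours through the comb is strictly longer than $2L=|ab|$.

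The principal obstacle will be the optimality certificate for the construction. One must rule out every cheaper alternative RBP spanning graph; in particular the alternatives that (i) drop $ab$ and reconnect the two anchor clusters through the comb, and (ii) replace some $u_iv_i$ by a detour through neighbouring comb pairs or through $a$ or $b$. I expect this to reduce to a careful calibration of the parameters $L$, $\eta$, and the spacing of the $x_i$, combined with a local exchange argument showing that any candidate competitor is dominated edge-by-edge in weight. Once the calibration goes through, $G^*$ contains $ab$ and all $k$ edges $u_iv_i$, each crossing $ab$, delivering the $\Omega(n)$ lower bound and completing the proof.
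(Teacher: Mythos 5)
Your upper bound is fine and essentially free: since $G^*[R\cup P]$ and $G^*[B\cup P]$ are spanning trees and $G^*$ has no red--blue edges, $G^*$ has at most $2n-2$ edges, so $O(n)$ crossings per edge is immediate. The real content is the lower-bound construction, and there the step you yourself flag as the principal obstacle --- forcing the long purple edge $ab$ into the optimum --- genuinely fails, and no calibration of $L$, $\eta$, and the spacing of the $x_i$ can save the flat comb. The points $a$ and $b$ are purple, so each must be joined to the rest of the point set in \emph{both} colours; a single purple edge from $a$ to the nearest comb pair (length about $L-w$, where $2w$ is the width of the comb) achieves both at once, and likewise for $b$. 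The comb is internally connected by edges needed in any solution, so dropping $ab$ and adding these two attachment edges costs about $2L-2w<2L=\|ab\|$, whereas keeping $ab$ still forces at least one further edge of length about $L-w$ from the comb to $\{a,b\}$, because the comb's purple points must reach $a$ and $b$ in both trees. Hence the optimum never contains $ab$, the vertical edges $u_iv_i$ then cross nothing purple, and your instance exhibits zero purple--purple crossings. The defect is geometric, not a matter of tuning: for any comb point $u$ lying essentially on the segment $ab$ one has $\|au\|+\|ub\|\approx\|ab\|$, so routing through the comb can always be made no more expensive than the direct edge.

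The paper's construction shows which ingredients are actually needed. First, the purple points to be crossed lie on a circular arc bulging away from the segment $p_Np_S$, calibrated so that $\|p_N\nu\|+\|p_S\eta\|>\|p_Np_S\|$ for \emph{every} pair of arc points $\nu,\eta$ --- precisely the inequality your flat comb violates. Second, $p_N$ is pre-connected to the arc by a chain of red points and $p_S$ by a chain of blue points, so each endpoint lacks connectivity in only one colour, and the single purple edge $p_Np_S$ supplies the missing colour for both endpoints simultaneously; this is what lets one edge of length $\|p_Np_S\|$ beat two separate attachments. Third, the crossed edges are not near-degenerate verticals but the chords $q_ir_i$ of the arc, forced level by level via the trapezoid inequality $\|q_iq_{i+1}\|+\|r_ir_{i+1}\|>\|q_ir_i\|$ together with red/blue chains alternating sides, and the optimality certificate leans on the circle-specific structural lemmas (no purple edge crosses another edge on a circle, bounded purple degree) rather than on an edge-by-edge exchange. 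To repair your approach you would need at least the first two ingredients, at which point you have essentially rebuilt the paper's construction.
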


The proof of this statement is constructive: we create a problem instance whose (unique) RBP spanning graph has a purple edge crossed by all other purple edges. Due to its length, the exact details of the construction and the additional observations necessary to prove this statement are relegated to the appendix. A sketch of the construction is shown in Figure~\ref{fig:MartiniGlass-Preview}.

\begin{figure}[tb]
\centering
\includegraphics{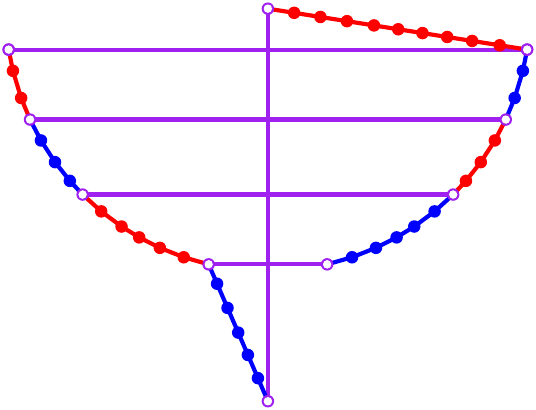}
\caption{Schematic view of a set of points whose minimum RBP spanning graph has a purple edge crossing a linear number of other purple edges. The figure is not to scale; see the appendix for details.}
\label{fig:MartiniGlass-Preview}
\end{figure}

\section{Points on a line or on a circle}\label{sec:linecircle}

In this section we describe efficient algorithms to find a minimum RBP spanning graph if the points lie on a line or on a circle. In the circle case, we first present additional geometric observations that allow us to use dynamic programming.

\subsection{Points on a line}\label{sec:1-dim}

\begin{figure}[t]
\centering
\includegraphics{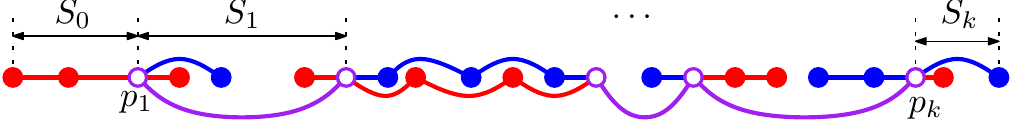}
\caption{A minimum RBP spanning graph of points on a line, and its partition into sets $S_i$ (some edges are depicted by curved arcs for clarity).}
\label{fig:onedimspanning}
\end{figure}

We now consider the case in which the points of $S$ lie on a line. We number the purple points $p_1,\ldots, p_\pp$ from left to right. For every $1\leq i\leq \pp-1$, let $S_i$ be the set of points between $p_i$ and $p_{i+1}$ (this set will have both $p_i$ and $p_{i+1}$, and possibly some blue and/or red points). We also define $S_0$ to contain $p_1$ and all red/blue points to its left, and $S_{\pp}$ to contain $p_{\pp}$ and all red/blue points to its right (see Figure~\ref{fig:onedimspanning}). First, we show that each such subset of points can be treated independently.

\begin{lemma}\label{lem_partition}
Let $S$ be a set of red, blue, and purple points located on a line, and let $G^*$ be a minimum RBP spanning graph of $S$.
Then for every edge $qq'$ of $G^*$, the points $q$ and $q'$ are  in $S_j$, for some $j\in\{0,\ldots k\}$.
\end{lemma}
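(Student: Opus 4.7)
The plan is to argue by contradiction. Suppose $G^*$ contains an edge $qq'$ whose endpoints do not share any $S_j$; then some purple point $p_i$ lies strictly between $q$ and $q'$ on the line, so that collinearity gives $|qq'| = |qp_i| + |p_iq'|$. I would then split on the color of $qq'$. If $qq'$ is a red edge (the blue case is symmetric), it belongs to the red/purple spanning tree of $G^*$, and its removal splits that tree into two components. The purple point $p_i$ lies in one of them---say the one containing $q$. Replacing $qq'$ with the strictly shorter edge $p_iq'$ reconnects red/purple; blue/purple is unaffected because $qq'$ had a red endpoint (so it was never in the blue/purple subgraph) and the added edge either also has a red endpoint or is purple and hence only augments blue/purple. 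Therefore the total weight strictly drops by $|qp_i|>0$, contradicting the minimality of $G^*$.

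The remaining case, in which $qq'$ is a purple edge, is the main obstacle, since the natural replacement $G' = G^* - \{qq'\} + \{qp_i, p_iq'\}$ has exactly the same weight as $G^*$. I would sub-split on which component of $G^* - \{qq'\}$ contains $p_i$ in each of the red/purple and blue/purple spanning trees. If $p_i$ lies on the same side (say the $q$-side) in both, then adding just the single purple edge $p_iq'$ reconnects both subgraphs simultaneously, giving a valid RBP spanning graph of weight $|G^*| - |qp_i|$---again contradicting minimality.

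The delicate sub-case is when $p_i$ lies on the $q$-side in one of the two spanning trees (say red/purple) but on the $q'$-side in the other (blue/purple). Here I would use $G'$: by minimality of $G^*$, no edge of $G'$ can be deleted while keeping it a valid RBP spanning graph, for otherwise its weight would drop below $|G^*|$. Now in $G'$ the red/purple subgraph contains a unique cycle $C_R$ consisting of the edge $qp_i$ together with the $p_i$-to-$q$ path $P_R$ that was already present in $G^*$'s red/purple tree, and symmetrically the blue/purple subgraph of $G'$ contains a cycle $C_B = \{p_iq'\} \cup P_B$. If any edge of $P_R$ were red (not purple), removing it would preserve red/purple connectivity (it sits on a cycle) and not touch the blue/purple subgraph (red edges are absent from it), contradicting non-removability. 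Hence every edge of $P_R$ is purple, and by the symmetric argument on $C_B$ every edge of $P_B$ is purple. But a purple path also lies in the red/purple subgraph, so $P_B$ witnesses a red/purple route from $p_i$ to $q'$ in $G^* - \{qq'\}$---contradicting the sub-case hypothesis that $p_i$ and $q'$ lie in different components of that forest. This final contradiction rules out the delicate sub-case and completes the proof.
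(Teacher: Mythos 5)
Your proof is correct and rests on the same idea as the paper's: reroute the edge $qq'$ through the intervening purple point $p_i$ and then argue that an edge can be dropped, contradicting minimality of $G^*$. The main difference is that the paper adds both $qp_i$ and $p_iq'$ and asserts it is ``straightforward to verify'' that some edge of the resulting cycle can be removed while preserving both monochromatic subgraphs, whereas your case analysis --- adding only the single bridging edge when possible, and in the delicate purple sub-case showing that the paths $P_R$ and $P_B$ would have to be entirely purple, contradicting the component hypothesis --- actually supplies that missing justification in full.
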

\begin{proof}
Let $qq'$ be an edge of $G^*$ such that $q \in S_i$, $q' \in S_j$, and $i \neq j$.
Then $qq'$ goes through a purple point $p$.
It is straightforward to verify that we can replace such edge by the pair $qp$ and $pq'$. The new construction has the same weight and also is a RBP spanning graph. Moreover, since it contains one extra edge, it must contain a cycle. Therefore, there is an edge that can be removed while preserving the RBP spanning property. This removal will decrease the total weight of the tree, in contradiction with the optimality of $G^*$.
\end{proof}

Using this lemma it is straightforward to obtain an efficient algorithm.

\begin{theorem}\label{theo-1d}
Let $S$ be a set of $n$ red, blue, and purple points located on a line, given in sorted order. A minimum RBP spanning graph of $S$ can be computed in $O(n)$ time.
\end{theorem}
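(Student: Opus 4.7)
My plan is to invoke Lemma~\ref{lem_partition}, which says that no optimal edge crosses between different subsets $S_j$; consequently the optimum is the sum of the optima within each subset, and we may solve each subset independently. Since the input is already sorted along the line, a single linear scan suffices to locate the purple points and produce the partition $S_0, S_1, \dots, S_\pp$.

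For the boundary subsets $S_0$ and $S_\pp$, only one purple point belongs to each subset, so no purple--purple edge with both endpoints inside the subset exists. The red--purple subgraph restricted to the subset must then be a spanning tree of the reds together with the unique purple, and on a line the minimum such tree is the chain in sorted order, of cost equal to the distance from the purple to the farthest red on the appropriate side; the blue side is analogous. Both trees can be produced during the same scan.

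For an interior subset $S_i$ with $1 \le i \le \pp-1$ and $d_i := p_{i+1}-p_i$, the only candidate purple--purple edge is $p_ip_{i+1}$, so there are exactly two cases to compare. Without it, the red and blue spanning subgraphs must each independently contain a path between the two purples; on a line the minimum such tree is the sorted chain through the relevant red--purple (resp. blue--purple) points, whose length telescopes to $d_i$, giving a total of $2d_i$. With it, the red edges only need to form a spanning forest on the red--purple points of $S_i$ in which every component contains at least one of $p_i,p_{i+1}$ (so that adding $p_ip_{i+1}$ connects everything); on a line this minimum forest is the red--purple chain minus its longest gap $g_r^{(i)}$, of cost $d_i-g_r^{(i)}$, and analogously $d_i-g_b^{(i)}$ for blue, for a grand total of $3d_i-g_r^{(i)}-g_b^{(i)}$. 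Setting $g_r^{(i)}:=d_i$ when $S_i$ contains no reds (and similarly for blues) handles the degenerate cases in which one color is absent and the purple edge is forced. The local optimum is the smaller of the two values, and its edge set is determined explicitly by the choice.

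The main technical point is to justify the formula $d_i-g_r^{(i)}$ for the minimum red completion of the purple edge. This reduces to the observation that on a line every spanning tree of a point set is the sorted chain, so any spanning tree of the red--purple subset of $S_i$ that is required to include $p_ip_{i+1}$ is obtained from the unique chain by using $p_ip_{i+1}$ to close a cycle and then deleting the longest chain gap, which is $g_r^{(i)}$. Given this, computing $d_i,g_r^{(i)},g_b^{(i)}$ and writing the selected edges takes $O(|S_i|)$ time per subset, summing to $O(n)$ over all subsets and matching the claimed bound.
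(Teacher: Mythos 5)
Your proposal is correct and follows essentially the same route as the paper: decompose via Lemma~\ref{lem_partition}, then for each $S_i$ compare the two cases depending on whether the purple edge $p_ip_{i+1}$ is used, solving each case by a chain (respectively, a chain minus its longest gap) in time linear in $|S_i|$. You simply make explicit the cost formulas ($2d_i$ versus $3d_i-g_r^{(i)}-g_b^{(i)}$) that the paper leaves implicit in the phrase ``the longest edge in each path could be removed.''
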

\begin{proof}
Note that each set $S_i$ has at most two purple points. Thus, there can be at most one purple edge connecting points within each set. We consider two cases: either the edge is present or not. For each case, both the red and blue subproblems can be easily solved in time proportional to the size of $S_i$ (if the edge is not present, $p_i$ and $p_{i+1}$ must be connected  both by red and blue paths. Otherwise, the longest edge in each path could be removed, while preserving connectivity). Out of both possible solutions, we keep the one with minimum weight and proceed to the next value of $i$. By Lemma~\ref{lem_partition}, each subproblem $S_i$ can be solved independently, thus the minimality of each $S_i$ implies that the resulting RBP spanning graph has minimum weight.
\end{proof}

\subsection{Points on a circle}

Recall that, by Proposition~\ref{prop:MartiniGlass}, when the points are in the plane, a purple edge may cross many other purple edges. This fact holds even if the points lie in convex position (see an example with a single crossing in Figure~\ref{fig:spanningexamples} (b); a similar example with more crossings is given in the Appendix). However, below we prove that for points on a circle the situation is structurally different, and purple edges cannot cross any other edges.

\begin{lemma}\label{lem_nocross}
Let $S$ be a set of red, blue, and purple points located on a circle.
A minimum RBP spanning graph of $S$ cannot have a purple edge crossing any other edge.
\end{lemma}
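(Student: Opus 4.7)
The plan is to argue by contradiction. Suppose $G^*$ is a minimum RBP spanning graph and some purple edge $pq \in G^*$ crosses another edge $e = uv \in G^*$. Since all points lie on a circle, the cyclic order of $\{p, u, q, v\}$ around the circle must be $p, u, q, v$ after relabeling. Let $x$ denote the crossing point of the chords $pq$ and $uv$. The triangle inequality applied at $x$ yields the quadrilateral uncrossing inequalities
\[
|pu| + |qv| < |pq| + |uv|, \qquad |pv| + |qu| < |pq| + |uv|,
\]
and a short computation using $|ab| = 2\sin(\mathrm{arc}(ab)/2)$, split on whether the relevant arc sums exceed $\pi$, gives the further chord inequalities $\min(|up|,|uq|) < |uv|$ and $\min(|vp|,|vq|) < |uv|$. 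The strategy is to use these inequalities together with the tree structures of $T_R$ and $T_B$ to produce a strictly lighter RBP spanning graph.

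Case 1 ($e$ red; the blue case is symmetric). Here $e \in T_R$ only, while $pq$ lies in both $T_R$ and $T_B$. Removing $e$ from $T_R$ yields two components; since $pq$ remains, $p$ and $q$ lie in one component together with exactly one of $u, v$, and the other endpoint $w$ of $e$ lies in the second component. By the chord inequality, there is a strictly shorter edge from $w$ to $\{p, q\}$ among $\{pw, qw\}$, which reconnects the two components. If $w$ is red, this reconnecting edge is red and the swap stays inside $T_R$, leaving $T_B$ untouched---contradicting optimality. If $w$ is purple (possible only when $e$ is a red--purple edge), the reconnecting edge is purple; adding it to $G^*$ in place of $e$ still yields an RBP spanning graph (the only effect on $T_B$ is the possible creation of a redundant cycle) and strictly decreases the total weight, again contradicting optimality.

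Case 2 ($e$ purple). All four candidate replacements $pu, pv, qu, qv$ are purple and hence enter both $T_R$ and $T_B$ whenever they are added. Removing $e$ from $T_R$ (resp.\ $T_B$) again leaves $p$ and $q$ together with exactly one of $u, v$; call the remaining vertex $r_R \in \{u, v\}$ (resp.\ $r_B \in \{u, v\}$). When $r_R = r_B$, adding the shorter of the two purple chords from this common vertex to $\{p, q\}$ reconnects both trees simultaneously and is strictly shorter than $e$---a contradiction. When $r_R \neq r_B$, I would instead perform the full uncrossing, replacing $\{pq, e\}$ by one of $\{pu, qv\}$ or $\{pv, qu\}$. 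A short analysis of the three-component decompositions of $T_R \setminus \{pq, e\}$ and $T_B \setminus \{pq, e\}$ (each of which groups exactly one of $\{p,q,u,v\}$ with another, since $e$ lies on one side of $pq$ in each tree) shows that in most combinations one of the two uncrossings reconnects both trees. The main obstacle is ruling out the remaining ``incompatible'' sub-configurations, in which $T_R$ and $T_B$ demand different uncrossings. I expect to handle these by choosing $G^*$, among all minimum RBP spanning graphs, to additionally minimize the number of crossings, so that any incompatibility would admit a weight-preserving swap reducing the crossing count, contradicting the choice.
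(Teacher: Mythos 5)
Your Case~1 (a red or blue edge crossing the purple edge) and the $r_R=r_B$ part of Case~2 are sound, and the chord inequalities you invoke are correct. The genuine gap is exactly the sub-case you flag at the end, and the fix you sketch does not close it. In the ``incompatible'' configuration (say $r_R=v$, $r_B=u$, with $u$ landing in $p$'s component of $T_R\setminus\{pq,uv\}$ and $v$ landing in $p$'s component of $T_B\setminus\{pq,uv\}$), the edge $pu$ is useless for reconnecting $T_R$ and $pv$ is useless for $T_B$, so neither uncrossing pair $\{pu,qv\}$ nor $\{pv,qu\}$ restores both trees; the only two-edge replacement that does is $\{qu,qv\}$, and the crossing inequality only guarantees that \emph{one} of $|pu|+|pv|$, $|qu|+|qv|$ is below $|pq|+|uv|$ --- not necessarily the one you need. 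Your proposed secondary minimization of the crossing number therefore has no weight-non-increasing swap to feed on, and the argument stalls.

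The repair is a single-edge swap symmetric to the one you already use: instead of asking which endpoint of $uv$ is cut off, ask which endpoint of $pq$ is cut off. Let $s_R\in\{p,q\}$ be the vertex separated from $\{u,v\}$ in $T_R\setminus\{pq\}$ and define $s_B$ analogously. A short check shows that $r_R\neq r_B$ together with ``incompatibility'' forces $s_R=s_B$ (in the configuration above, $s_R=s_B=q$, since $u$ and $v$ both end up on $p$'s side of each tree once $uv$ is retained). Then deleting $pq$ and adding the shorter of $qu,qv$ reconnects both trees, and on a circle $\min(|qu|,|qv|)<|pq|$ because $u$ and $v$ lie on the two arcs determined by $p$ and $q$ and one of those arcs is at most a semicircle. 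This is precisely how the paper's proof handles its ``paths share one endpoint'' case; your three cases ($r_R=r_B$; $s_R=s_B$; both differ, in which case the trees agree on the uncrossing) would then cover everything. So the approach is salvageable and close to the paper's, but as written the key sub-case is unproven.
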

\begin{proof}
Let $G$ be a minimum RBP spanning graph in which two edges $e_1={vv'}$ and $e_2={ww'}$ cross. We will perform a local transformation that will reduce the weight of $G$, contradicting the minimality of $G$.

First assume that both $e_1$ and $e_2$ are purple, and consider the four paths in $G[R\cup P]$ starting at either $v$ or $v'$ and ending at $w$ or $w'$.
Without loss of generality, we can assume that the minimum-link path among the four (i.e., the path with smallest number of edges) connects $v$ and $w$ (illustrated in Figure~\ref{fig_circlecases}).
Let $\pi_R$ be such path; note that, by definition of minimum-link path, $\pi_R$ cannot use edge ${vv'}$, since then the red path connecting $v'$ and $w$ would be one link shorter.
Analogously, $\pi_R$ cannot use edge ${ww'}$ neither.
Let $\pi_B$ be the minimum-link blue path among those that connect $v$ or $v'$ with $w$ or $w'$. We now distinguish three cases depending on the number of shared endpoints between $\pi_R$ and $\pi_B$:

\begin{figure}[t]
\centering
\includegraphics{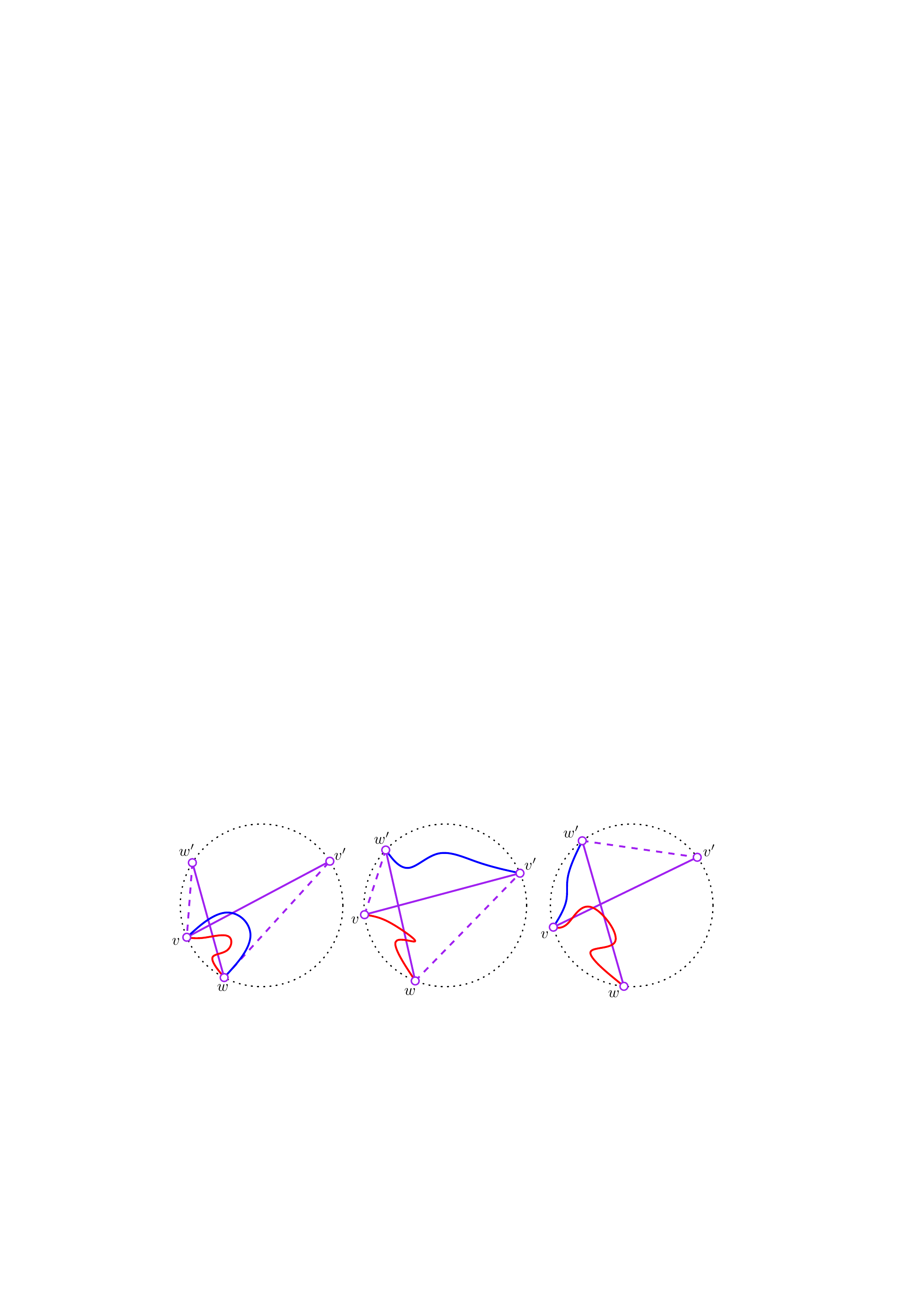}
\caption{The three cases in the proof of Lemma~\ref{lem_nocross}, local transformations are shown by dashed purple edges. No assumptions are made on the number of crossings between $\pi_R$, $\pi_B$, ${vv'}$, and ${ww'}$.
Note that the red and blue paths are drawn schematically---in the actual paths all vertices must lie on the circle.}
\label{fig_circlecases}
\end{figure}

\begin{description}
\item[Paths $\pi_R$ and $\pi_B$ share both endpoints.]
This case is depicted in Figure~\ref{fig_circlecases}, left.
We replace edges ${vv'}$ and ${ww'}$ with edges ${vw'}$ and ${v'w}$. Since we replaced two diagonals of a quadrilateral by two sides, the resulting graph $G'$ has smaller weight than $G$. We now prove that $G'$ is indeed spanning. First consider the red tree in $G'$: the removal of edge ${ww'}$ creates two connected components. Moreover, points $v$ and $w'$ must belong to different connected components (otherwise, the edge ${ww'}$ together with $\pi_R$ and the path from $v$ to $w'$ would create a cycle in $G$). Thus, by adding the edge ${vw'}$ we reconnect the two components again. Likewise, the removal of edge ${vv'}$ creates two red connected components that are reconnected with the edge ${wv'}$. That is, graph $G'$ also spans red. We repeat the same reasoning for blue and obtain that $G'$ is a RBP spanning graph with smaller weight than $G$, a contradiction.
\item[Paths $\pi_R$ and $\pi_B$ share no endpoints.]
This case is depicted in Figure~\ref{fig_circlecases}, center.
We proceed as in the previous case, replacing edges ${vv'}$ and ${ww'}$ by ${vw'}$ and ${v'w}$. The argumentation is identical to the previous case.
\item[Paths $\pi_R$ and $\pi_B$ share one endpoint.]
This case is depicted in Figure~\ref{fig_circlecases}, right.
Without loss of generality, $v$ is the shared endpoint and the other endpoint of $\pi_B$ is $w'$.
In this case, the red path in $G$ from $v'$ to $w$, and the blue path in $G$ from $v'$ to $w'$, both use the edge ${vv'}$.
We can replace this edge by either ${v'w}$ or ${v'w'}$ while  maintaining the spanning property. Using the fact that the four vertices lie on a circle, it is easy to see that either $||{v'w}||<||{v'v}||$ or $||{v'w'}||<||{v'v}||$ must hold, thus one of the two resulting graphs will have smaller weight.
\end{description}

If one of the edges is not purple the situation is easier, since we need to consider only one color. Assume that edge $e_2={ww'}$ is red, and that the red path from $v$ to $w$ uses neither $e_1$ nor $e_2$. Then we can replace edge $e_2$ by either ${vw'}$ or ${v'w'}$ to obtain a graph of smaller weight. Note that, since we are replacing a red edge by another red edge, the spanning property of blue cannot be altered, and the lemma is proved.
\end{proof}

Next we present another crossing property that will be useful for our algorithm.

\begin{corollary}
\label{cor:crosssings}
Let $S$ be a set of red, blue, and purple points located on a circle.
In a minimum RBP spanning graph $G$ of $S$, no red or blue edge of $G$ can 
cross any segment between two purple points.
\end{corollary}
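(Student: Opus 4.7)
My plan is to derive a contradiction from the cut property of the minimum spanning tree on $R\cup P$, using the observation (made earlier in this section) that, in the absence of ties between edges of the same color, every red edge of a minimum RBP spanning graph lies in some MST of $R\cup P$. By symmetry it suffices to rule out a red edge $vv'$ of $G$ crossing a purple-purple chord $pp'$; assume for contradiction that such a crossing exists.

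First I would use the circular geometry to extract a purple point lying on a short arc of $vv'$. Since $vv'$ crosses $pp'$, the points $p$ and $p'$ lie on opposite arcs of the chord $vv'$. At least one of these two arcs has length at most $\pi$, so after relabelling if necessary $p$ lies on such an arc between $v$ and $v'$. Because $p$ is distinct from $v$ and $v'$, the two sub-arcs from $v$ to $p$ and from $p$ to $v'$ are each strictly shorter than this arc; since the chord-length function $2r\sin(\theta/2)$ is strictly increasing on $\theta\in[0,\pi]$, this yields $|vp|<|vv'|$ and $|pv'|<|vv'|$.

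Next I would apply the cut property. Fix an MST $T$ of $R\cup P$ that contains $vv'$. Removing $vv'$ from $T$ splits $R\cup P$ into two components $C_v\ni v$ and $C_{v'}\ni v'$. Since $p\in R\cup P$, it belongs to one of these two components, so one of the edges $vp,\, v'p$ crosses the cut $(C_v,C_{v'})$. By the previous paragraph this crossing edge is strictly shorter than $vv'$, contradicting the minimality of $vv'$ as an edge across its cut in $T$. The case of a blue edge crossing $pp'$ is entirely symmetric.

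The main obstacle is recognising that a direct adaptation of the path-swap argument from Lemma~\ref{lem_nocross} breaks down: there the purple edge was guaranteed to be in $G$, but here $pp'$ need not be an edge of $G$, and a naive swap $G-vv'+pp'$ may in fact increase the total weight. The way out is to stop reasoning inside $G$ and instead exploit the cut property in the MST of $R\cup P$, where the existence of a purple point $p$ on the short arc of $vv'$ immediately provides a strictly shorter witness, with no swap or reasoning about the blue-purple subtree required.
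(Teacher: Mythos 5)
Your proof is correct, but it follows a genuinely different route from the paper's. The paper stays entirely inside $G$: exactly as in the single-colored case at the end of the proof of Lemma~\ref{lem_nocross}, it takes the minimum-link red path between $\{p,p'\}$ and $\{r,r'\}$ and replaces the red edge $rr'$ by $r'p$ or $r'p'$, using the inscribed-quadrilateral fact that a side at $r'$ is shorter than the diagonal $rr'$. So your concern that the lemma's swap ``breaks down'' is aimed at a strawman: nobody swaps in $pp'$ itself; the swapped-in edge is $r'p$ or $r'p'$. Your alternative is nevertheless sound, and your geometric witness is cleaner and strictly stronger: a purple point $p$ on the minor arc of $vv'$ satisfies \emph{both} $|vp|<|vv'|$ and $|v'p|<|vv'|$, so whichever side of the cut $p$ lands on, the reconnecting edge is strictly shorter and no case analysis about which replacement preserves connectivity is needed. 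Two small remarks. First, the observation you invoke is stated in Section~\ref{sec:properties} under a no-ties assumption; you can sidestep it by running the identical cut/exchange directly on the red--purple subtree $G[R\cup P]$ of $G$ (remove $vv'$, add whichever of $vp,v'p$ crosses the resulting cut; blue connectivity is unaffected since only a red edge is removed and at worst a purple edge is added), which contradicts the minimality of $G$ without ever mentioning MSTs --- or simply note that ``minimum-weight across the cut it defines, hence contained in some MST'' holds even in the presence of ties. Second, your strict inequalities require $p\notin\{v,v'\}$; this holds because properly crossing segments share no endpoint, but it deserves a sentence.
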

\begin{proof}
Let $p$, $p'$ be any two purple points of $S$ (not necessarily connected by an edge in $G$), and assume that a red edge $rr'$ of $G$ crosses segment ${pp'}$. As in the proof of Lemma~\ref{lem_nocross}, assume that the red path from $p$ to $r$ uses neither $p'$ nor $r'$. Then, edge $rr'$ can be replaced by either $r'p$ or $r'p'$ to obtain a RBP spanning graph of smaller weight.
\end{proof}

Analogously to the 1-dimensional case of points on a line, we number the purple points $p_1,\ldots, p_\pp$ in clockwise order. For any $1\leq i\leq \pp$, let $S_i$ be the set of red and blue points between $p_i$ and $p_{i+1}$. We assume that all indices are taken modulo $k$, thus $p_{\pp+1}=p_1$.
As a consequence of Corollary~\ref{cor:crosssings}, we have the following.

\begin{corollary}\label{cor_indep}
Let $S$ be a set of red, blue, and purple points located on a circle, and let $G$ be a minimum RBP spanning graph of $S$. No edge of $G$ incident to a point in $S_i$ can cross segment $p_ip_{i+1}$, for $1\leq i\leq \pp$.
\end{corollary}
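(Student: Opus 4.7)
The plan is to derive this essentially as an immediate consequence of Corollary~\ref{cor:crosssings}, after first classifying the color of any edge incident to a point in $S_i$.

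First I would observe that the points in $S_i$ are, by definition, only the red and blue points lying between $p_i$ and $p_{i+1}$ on the circle; no purple point belongs to any $S_i$. Let $q \in S_i$ and let $qq'$ be an edge of $G$. Since $q$ is not purple, the edge $qq'$ cannot be a purple edge (which, by the color conventions introduced at the beginning of Section~\ref{sec:properties}, must connect two purple points). Moreover, as noted there, a minimum RBP spanning graph contains no edges between a red and a blue point. Consequently $qq'$ is either a red edge or a blue edge, that is, a monochromatic edge in the sense of Corollary~\ref{cor:crosssings}.

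Next I would apply Corollary~\ref{cor:crosssings} with $(p, p') = (p_i, p_{i+1})$: since $p_i$ and $p_{i+1}$ are both purple, no red or blue edge of $G$ can cross the chord $p_ip_{i+1}$. Combined with the previous paragraph, this means that the edge $qq'$ cannot cross $p_ip_{i+1}$, which is exactly what the corollary claims.

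The only thing worth being careful about is the case where $qq'$ is not purple but still incident to another purple point; Corollary~\ref{cor:crosssings} is stated for any red or blue edge and for any pair of purple points $p, p'$ (not necessarily connected by an edge of $G$), so this situation is already covered without modification. There is essentially no technical obstacle: the whole argument is a one-line reduction to Corollary~\ref{cor:crosssings} once one remarks that every edge incident to a point of $S_i$ is monochromatic.
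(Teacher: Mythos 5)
Your proof is correct and follows exactly the route the paper intends: the paper gives no explicit proof, simply asserting the corollary ``as a consequence of Corollary~\ref{cor:crosssings}'', and your one-line reduction (every edge incident to a non-purple point of $S_i$ is red or blue, hence cannot cross the chord $p_ip_{i+1}$ between two purple points) is precisely that argument made explicit.
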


We have proved that each subset $S_i$ can be treated independently when computing a minimum RBP spanning graph of $S$.

\begin{lemma}
\label{lem:circle_deg_at_most_2}
Let $S$ be a set of red, blue, and purple points located on a circle, and let
$G$ be a minimum RBP spanning graph of $S$.
For any purple point $p_i \in S$, let $p_i^*$ be the point on the circle diametrically opposite to $p_i$, and let $C$ be any of the two closed semicircles containing both $p_i$ and $p_i^*$.
It holds that $p_i$ has at most one purple neighbor in $G \cap C$, and thus at most two purple neighbors in total.
\end{lemma}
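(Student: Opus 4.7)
The plan is to argue by contradiction. Suppose the purple point $p_i$ has two distinct purple neighbors $q_1, q_2$ in $G$ that both lie on $C$. Since $C$ is the closed semicircular arc with endpoints $p_i$ and $p_i^*$, both $q_1$ and $q_2$ lie on the arc from $p_i$ to $p_i^*$ inside $C$. Without loss of generality, $q_1$ lies strictly between $p_i$ and $q_2$ along this arc. I will construct a strictly lighter RBP spanning graph $G'$ by replacing the edge $p_i q_2$ with the edge $q_1 q_2$, contradicting the minimality of $G$.

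The first step is the chord-length estimate $|q_1 q_2| < |p_i q_2|$. Since the three points lie on the circle and $q_2$ lies in the closed semicircle from $p_i$ through $p_i^*$, the corresponding central angles satisfy $\alpha_{p_i q_2} = \alpha_{p_i q_1} + \alpha_{q_1 q_2}$, and all three angles lie in $(0,\pi]$. Writing each chord length as $2r\sin(\theta/2)$ and using the strict monotonicity of $\sin(\theta/2)$ on $(0,\pi]$ together with $\alpha_{p_i q_1} > 0$, the inequality follows.

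The second step is to verify that $G' := G - p_i q_2 + q_1 q_2$ is still an RBP spanning graph. First, $q_1 q_2 \notin G$: otherwise, since any purple edge is induced in both $G[R\cup P]$ and $G[B\cup P]$, the three purple edges $p_i q_1$, $p_i q_2$, $q_1 q_2$ would all lie in the spanning tree $G[R\cup P]$, creating a cycle. Now removing the purple edge $p_i q_2$ from $G[R\cup P]$ disconnects this tree into two components; the edge $p_i q_1$ places $q_1$ in the same component as $p_i$ while $q_2$ lies in the other, so adding $q_1 q_2$ yields a spanning tree on $R \cup P$. The identical argument applies to $G[B\cup P]$. Hence $G'$ is an RBP spanning graph of strictly smaller total weight, contradicting minimality.

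The ``at most two in total'' conclusion then follows immediately by covering the circle with the two closed semicircles through $p_i$ and $p_i^*$ and applying the first part to each, noting that a potential neighbor located exactly at $p_i^*$ lies in both semicircles and is therefore not double-counted. The main delicacy in the proof is the verification that the swap preserves both spanning trees, but this goes through cleanly precisely because every purple edge is automatically contained in both induced subgraphs $G[R\cup P]$ and $G[B\cup P]$, which makes the cycle and reconnection arguments symmetric between the two colors.
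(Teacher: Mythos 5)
Your proof is correct and takes essentially the same route as the paper's: the identical edge swap replacing $p_i q_2$ by $q_1 q_2$, justified by the same chord-length comparison on the closed semicircle. The only difference is that you spell out the tree-reconnection argument and the $2r\sin(\theta/2)$ monotonicity that the paper leaves implicit.
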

\begin{proof}
Assume without loss of generality that $C$ is the semicircle to the right of the oriented segment $p_i p_i^*$.
Suppose, for the sake of contradiction, that $p_i$ has at least two purple neighbors in $C$.
Let $p_1$ and $p_2$ be any two of them, as encountered in counterclockwise order from $p_i$ to $p_i^*$ along $C$.
Since $p_i$, $p_1$, $p_2$, and $p_i^*$ are lie in the same semicircle, $p_i p_2$ is longer than  $p_1 p_2$.
Then $G$ can be improved by replacing $p_i p_2$ by $p_1 p_2$, leading to a smaller total length and providing the same connectivity.
It follows that $p_i$ can have at most one purple neighbor in $C$.
\end{proof}

Using these geometric observations we can now present an algorithm to compute a minimum RBP spanning graph of points on a circle.

\begin{theorem}\label{theorem:circle}
Let $S$ be a set of $n$ red, blue, and purple points located on a circle, and angularly sorted. A minimum RBP spanning graph of $S$ can be computed in $O(\pp^3+n)$ time, where $k$ is the number of purple points.
\end{theorem}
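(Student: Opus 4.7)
The plan is to decouple the per-arc subproblems from the global structure on the purple points via preprocessing, and then solve the reduced problem on the $k$ purple points by an interval dynamic program. By Corollary~\ref{cor_indep}, every edge incident to a point of $S_i$ stays inside the lune bounded by the chord $p_ip_{i+1}$ and the circular arc, so the red and blue subproblems local to each arc are mutually independent and independent of the rest of the instance. For each arc $i$ I would precompute four numbers: $\alpha_i$, the minimum weight of a red tree spanning $\{p_i,p_{i+1}\}$ together with the red points of $S_i$; $\alpha_i^0$, the minimum weight of a red forest whose two components contain $p_i$ and $p_{i+1}$ respectively and cover all the red points of $S_i$; and the analogous quantities $\beta_i,\beta_i^0$ for blue. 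Since all the points involved lie in convex position on a single circular arc, each of these four numbers can be computed by an easy adaptation of the one-dimensional algorithm of Theorem~\ref{theo-1d} in $O(|S_i|+1)$ time, for $O(n)$ in total.

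\textbf{Reduction to a problem on the purple cycle.} After preprocessing, the remaining task is a combinatorial problem on the $k$ purple points on the circle: for every arc $i$ choose a red mode ``connected'' or ``disconnected'' (charging $\alpha_i$ or $\alpha_i^0$) and similarly a blue mode, and choose a set $E$ of purple chords such that (i) the chords in $E$ are pairwise non-crossing (Lemma~\ref{lem_nocross}) and every purple point is incident to at most two of them (Lemma~\ref{lem:circle_deg_at_most_2}), and (ii) the auxiliary graph on $\{p_1,\dots,p_k\}$ whose edges are $E$ together with the chord $p_ip_{i+1}$ of every red-connected arc is connected, and analogously for blue. The objective is the sum of the chosen arc costs plus the total length of $E$.

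\textbf{Interval dynamic program.} I would solve this reduced problem by an interval DP over cyclic sub-ranges $[i,j]$ of purple points. Define $f(i,j,\sigma)$ as the optimal cost of a valid substructure on $\{p_i,\ldots,p_j\}$ together with the intervening arcs, where $\sigma$ is a constant-size boundary state recording whether $p_i$ and $p_j$ are already red-/blue-connected to each other through the complement of the interval and how many purple-chord slots remain available at each of them. Because the chords are non-crossing, the recurrence for $f(i,j,\sigma)$ branches on the other purple neighbour of $p_i$ inside $[i,j]$ (or declares that $p_i$ has none), splitting the interval into two smaller sub-intervals whose optimum values combine with the cost of the new chord and the mode costs of the adjacent arcs. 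With $O(k^2)$ states and $O(k)$ split choices per state the DP runs in $O(k^3)$ time, and the cyclic nature is absorbed by the standard trick of duplicating the sequence to length $2k$ and considering only intervals of length at most $k$, which does not change the asymptotic bound. Adding the preprocessing yields the claimed $O(k^3+n)$ total time.

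\textbf{Main obstacle.} The delicate point is encoding the red and blue connectivity requirements in the constant-size state $\sigma$ so that the interval decomposition faithfully represents every optimal solution without double counting and while respecting the degree-$2$ cap on purple vertices. The non-crossing constraint on $E$ is delivered for free by the interval structure, and the correctness of the decomposition itself rests on Lemma~\ref{lem_nocross} and Corollary~\ref{cor_indep}, which imply that the chords of any optimum are laminar and hence that every optimum admits the hierarchical recursion exploited by the DP.
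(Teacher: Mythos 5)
Your proposal is correct and takes essentially the same approach as the paper: the paper also precomputes the four per-arc quantities via the line algorithm of Theorem~\ref{theo-1d} and then runs an interval dynamic program over pairs of purple points, branching on the extremal purple neighbour of one endpoint, for $O(\pp^2)$ entries at $O(\pp)$ time each. The ``main obstacle'' you flag is resolved in the paper by exactly four boundary states per interval ($PC$, $RC$, $BC$, $NC$, recording which of red/blue is already connected through the complement), and tracking remaining purple-chord slots turns out to be unnecessary.
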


\begin{proof}
Analogous to the 1-dimensional case, we number the purple points $p_1,\ldots, p_\pp$ in clockwise order. For any $1\leq i\leq \pp$, let $S_i$ be the set of red and blue points between $p_i$ and $p_{i+1}$. We assume that all indices are taken modulo $k$, thus $p_{\pp+1}=p_1$.

We present an algorithm based on dynamic programming. By Corollary~\ref{cor_indep}, we can find the solution for each set $S_i$ independently. This is a 1-dimensional problem that can be solved in linear time using the approach of Theorem~\ref{theo-1d}. Thus, we solve all subproblems $S_i$ in a preprocessing phase, in overall $O(n)$ time (assuming the points are circularly sorted).

The algorithm uses four tables indexed by pairs of purple points, of size $O(\pp^2)$, defined as follows:
\begin{itemize}
\item $PC[i,j]$ contains the length of a minimum RBP spanning graph  for the subproblem defined by all the points on and to the left of the oriented segment $p_i p_j$ (including $p_i$ and $p_j$), assuming that $p_i$ and $p_j$ are already connected in both the red and the blue subgraphs using points to the right of the oriented segment $p_i p_j$.
In other words, the  minimum RBP spanning graph whose length is stored in $PC[i,j]$ does not need to connect $p_i$ and $p_j$ in neither red nor blue.

\item $RC[i,j]$ contains the length of a minimum RBP spanning graph  for the subproblem defined by all the points on and to the left of the oriented segment $p_i p_j$ (including $p_i$ and $p_j$), now assuming that $p_i$ and $p_j$ are already connected in the red subgraph and disconnected in the blue subgraph. Therefore, the solution associated with $RC[i,j]$ must connect $p_i$ and $p_j$ in its blue subgraph (but not in the red subgraph).

\item $BC[i,j]$ is analogous to $RC[i,j]$, where the roles of red and blue are exchanged.

\item $NC[i,j]$ assumes that $p_i$ and $p_j$ are not connected in any color. Therefore, the solution associated with $NC[i,j]$ must connect $p_i$ and $p_j$ in both red and blue (either using edge $p_i p_j$ or in some other way).
\end{itemize}

With the tables above we can find the length of an optimal solution as
\begin{equation}
\label{eq:min}
\min_{j\leq k} \{ PC[1,j]+NC[j,1] , NC[1,j]+PC[j,1], RC[1,j]+BC[j,1], BC[1,j]+RC[j,1] \}.
\end{equation}
That is, if an optimal solution contains an edge of the form $p_1 p_j$ (for some $j\leq k$), Lemma~\ref{lem_nocross} guarantees that $p_1 p_j$ cannot be crossed by any other edge, giving rise to two independent subproblems, one to the left of the oriented segment $p_1 p_j$, and one to the right, of types $PC$ and $NC$.

Otherwise, $p_1$ cannot have an adjacency with other purple points in an  optimal solution. By Corollary~\ref{cor:crosssings}, no segment $p_1 p_j$ can be crossed. Therefore, each value of $j$ gives rise to two independent subproblems, which now can be of any of the types considered in Formula~\ref{eq:min}.



Thus, it remains to show how to fill in the four tables.
All four tables are filled in a very similar fashion, so in the following we focus on how to compute $\Lambda C[i,j]$ (for some $\Lambda \in\{P,R,B,N\}$).
As is standard in dynamic programming, we fill in the table entries in order, so that when it is time to compute the value of an entry $(i,j)$, all the entries corresponding to \emph{smaller} problems, i.e., corresponding to purple pairs to the left of the oriented segment $p_ip_j$, have already been computed.

The simplest entries are those of the form $(i,i+1)$. To the left of the oriented edge $p_ip_{i+1}$ there are no purple points, thus only red and blue points may be present.
A table entry of this kind can be computed using the approach of Theorem~\ref{theo-1d}, since each solution is independent of the rest. Note that some entries may have no solution (e.g., if there are only red points to the left of the oriented edge $p_ip_{i+1}$, there cannot be a solution for subproblem $RC[i,i+1]$). Whenever no solution exists, the corresponding entry of the table can be filled in with $+\infty$. Note that all entries of the form $PC[i,i+1]$, $RC[i,i+1]$, $BC[i,i+1]$, and $NC[i,i+1]$ can be computed in overall $O(n)$ time, since no red or blue point can appear in more than one set $S_i$.

Without loss of generality, assume $1 \leq i < j \leq \pp$.
In order to fill in entry $\Lambda C[i,j]$ we consider the following cases, and store the one with minimum cost.

\begin{figure}[bt]
\centering
\includegraphics{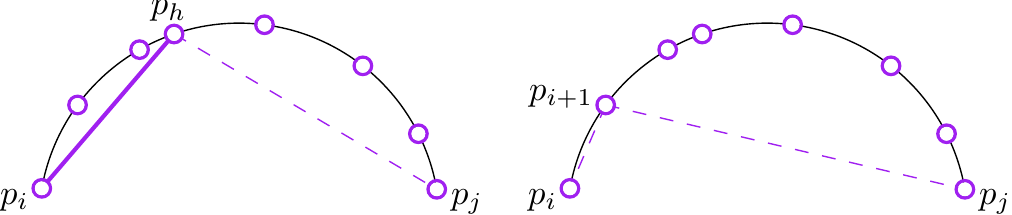}
\caption{Proof of Theorem~\ref{theorem:circle}: cases that appear when filling in entry $\Lambda C[i,j]$ (for some $\Lambda \in\{P,R,B,C\}$.}
\label{fig:DPCircle}
\end{figure}

\begin{itemize}

\item Case I: $p_i$ is connected to some purple point in an optimal solution of the subproblem associated to $(i,j)$.
Let $h$ be the largest such index, see Figure~\ref{fig:DPCircle}
(left).
Recall that vertices are indexed in clockwise order.
In this case, no purple edge can cross segment $p_h p_j$:  by definition of $h$, no edge connects $p_i$ to a vertex with index larger than $h$ in any solution. Hence, by Lemma~\ref{lem_nocross}, no edge can cross $p_i p_h$.
Since $p_i$ and $p_h$ are connected in both red and blue, $p_h$ and $p_j$ must be connected in the same way as $p_i$ and $p_j$. Thus, it follows that in this case $\Lambda C[i,j] = PC[i,h]+\Lambda C[h,j]+||p_ip_h||$.

Since we do not know the value of $h$, we try all possible candidates for $p_h$, and keep the one minimizing the cost.


\item Case II: $p_i$ is not connected to other purple points in any optimal solution of the subproblem associated to $(i,j)$. 
By Corollary~\ref{cor_indep}, in an optimal solution no edge can cross $p_i p_{i+1}$.
In addition, since no purple edge is incident to $p_i$, segment $p_{i+1}p_{j}$ cannot be crossed either, see Figure~\ref{fig:DPCircle} (right).
Therefore, an optimal solution can be found by combining the solutions associated to the pairs $(i,i+1)$ and $(i+1,j)$.
The exact cases to consider depend on the value of $\Lambda$:
\begin{description}
\item[$\Lambda=N$]
In this case it is assumed that $p_i$ and $p_j$ are not connected in any color.
Point $p_{i+1}$ must then connect to both $p_i$ and $p_j$ in each of the subproblems.
Thus, $NC[i,j]=NC[i,i+1]+NC[i+1,j]$.
\item[$\Lambda=R$] In this case $p_i$ and $p_j$ are assumed to be already connected in red. Thus, $p_{i+1}$ can only connect to one of them in red, and must connect to both of them in blue. That is, $RC[i,j]=\min\{NC[i,i+1]+RC[i+1,j], RC[i,i+1]+NC[i+1,j]\}$
\item[$\Lambda=B$] This case is analogous to the previous one: $BC[i,j]=\min\{NC[i,i+1]+BC[i+1,j], BC[i,i+1]+NC[i+1,j]\}$
\item[$\Lambda=P$] In this case $p_i$ and $p_j$ are assumed to be already connected in both colors. Point $p_{i+1}$ must then connect to either $p_i$ or $p_j$ (but not both) in each of the two colors. Thus, $PC[i,j]=\min\{NC[i,i+1]+PC[i+1,j], PC[i,i+1]+NC[i+1,j], RC[i,i+1]+BC[i+1,j], BC[i,i+1]+RC[i+1,j]\}$.
\end{description}

\end{itemize}

Based on the description above, in order to fill in one entry it is necessary to consider up to $O(\pp)$ possible subproblems, each taking constant time to evaluate. Therefore, the total running time of the algorithm is $O(\pp^3+n)$.
\end{proof}

\section{General-case algorithm}~\label{sec:matroid}
In this section we give a polynomial time algorithm for computing a RBP spanning graph, regardless of the position of the points in the plane.
Our solution is based on standard tools from matroid theory~\cite{schrijver-book}.
 The main result that we use is the existence of an efficient algorithm for the {\em weighted matroid intersection} problem, defined as follows: given two matroids with the same ground set $M$ and independent set families $\mathcal{I}_r$ and $\mathcal{I}_b$,
 and a weight function $w: M \rightarrow \mathbb{R}$, the aim is to find the common independent set $\mathcal{I}_r \cap \mathcal{I}_b$ of maximum weight. The algorithm runs in time polynomial on $|M|$, and more importantly, does not depend on the size of $\mathcal{I}_r$ or $\mathcal{I}_b$ (for more information on the weighted matroid intersection problem see~\cite[Section 41.3]{schrijver-book}).

We claim that any minimum RBP spanning graph problem instance can be transformed to a matroid intersection problem instance, and thus it can be solved in polynomial time on the number of points. In order to show this, we consider alternative definition of matroids via base families. For more information on the base families definition of matroids and its equivalence to the independent set formulation see~\cite{schrijver-book}. We start by viewing the red blue purple spanning tree problem from a graph perspective.

Let $R$, $B$, and $P$ denote the sets of red, blue, and purple points, respectively. Let $S = R \cup B \cup P$, and $G=(S,E)$ be the complete undirected graph whose ground set is $R \cup B \cup P$. We also define the weight $w(e)$ of an edge $e \in E$ as its Euclidean length.

We denote by $\mathcal{B}_r \subseteq 2^E$
the family of edge sets of trees in $G$ that span $R \cup P$.
 Similarly, we denote by $\mathcal{B}_b \subseteq 2^E$
the family of edge sets of trees in $G$ that span $B \cup P$.
It is well-known that $(E, \mathcal{B}_r)$ and $(E, \mathcal{B}_b)$ are matroids (known as \emph{graphic matroids}, see~\cite[Section 39.4]{schrijver-book}).

   We also define the following ``basis supersets'':
\begin{eqnarray*}
 {\mathcal{Q}}_r
& = \{X \subseteq E \mid X \mbox{ contains some }B \in \mathcal{B}_r \},\\
  {\mathcal{Q}}_b
& = \{X \subseteq E \mid X \mbox{ contains some }B \in \mathcal{B}_b \}.
\end{eqnarray*}

\noindent
 Then, the minimum RBP spanning graph problem is formulated as follows:
\[
\mbox{(P0) \quad Minimize}  \sum_{e \in X}w(e)
\quad
\mbox{subject to }
 X \in {\mathcal{Q}}_r \cap {\mathcal{Q}}_b,
\]
This formulation is not exactly a matroid intersection problem, but its complement is. Thus, we define $ \overline{\mathcal{Q}}_r = \{Y \subseteq E \mid E \setminus Y \in \mathcal{Q}_r \}$ and $\overline{\mathcal{Q}}_b
 = \{Y \subseteq E \mid E \setminus Y \in \mathcal{Q}_b \}$, and consider the following maximization problem:
\[
\mbox{(P1) \quad Maximize}  \sum_{e \in Y}w(e)
\quad
\mbox{subject to }
 Y \in \overline{\mathcal{Q}}_r \cap \overline{\mathcal{Q}}_b.
\]
Since the complement of a solution for (P0) is a solution for (P1) and {\em vice versa}, both problems are equivalent.
In this setting, $\overline{\mathcal{Q}}_r$ and $\overline{\mathcal{Q}}_b$
are independent set families of some matroids:

\begin{proposition}[{\cite[Section 39.2]{schrijver-book}}]
\label{prop:matroid}
 For any matroid $(E, \mathcal{B})$
with basis family $\mathcal{B} \subseteq 2^E$,
define
\begin{align*}
 {\mathcal{Q}}
& = \{X \subseteq E \mid X \mbox{\rm contains some }B \in \mathcal{B} \},\\
 \overline{\mathcal{Q}}
& = \{Y \subseteq E \mid E \setminus Y \in \mathcal{Q} \}.
\end{align*}
 Then, $(E, \overline{\mathcal{Q}})$ is also a matroid with
independent set family $\overline{\mathcal{Q}}$;
such a matroid is called the \emph{dual matroid} of $(E, \mathcal{B})$.
\end{proposition}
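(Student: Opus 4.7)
My plan is to show that $\overline{\mathcal{Q}}$ is the downward closure of the family $\mathcal{B}^{*} := \{E \setminus B : B \in \mathcal{B}\}$, and then verify that $\mathcal{B}^{*}$ satisfies the basis axioms of a matroid. First I would observe the chain of equivalences $Y \in \overline{\mathcal{Q}}$ iff $E \setminus Y \in \mathcal{Q}$ iff $E \setminus Y \supseteq B$ for some $B \in \mathcal{B}$ iff $Y \subseteq E \setminus B$ for some $B \in \mathcal{B}$, i.e., $Y$ is a subset of some element of $\mathcal{B}^{*}$. So once $\mathcal{B}^{*}$ is shown to be a basis family of a matroid, its independent-set family (the downward closure) is precisely $\overline{\mathcal{Q}}$, proving the proposition.

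Non-emptiness and equicardinality of $\mathcal{B}^{*}$ are immediate from the corresponding properties of $\mathcal{B}$: every member of $\mathcal{B}^{*}$ has cardinality $|E|-r$, where $r$ is the rank of $(E,\mathcal{B})$. The substantive step is the basis exchange axiom for $\mathcal{B}^{*}$. Given $E \setminus B_1, E \setminus B_2 \in \mathcal{B}^{*}$ and an element $e \in (E \setminus B_1) \setminus (E \setminus B_2) = B_2 \setminus B_1$, I must exhibit $f \in B_1 \setminus B_2$ such that $((E \setminus B_1) \setminus \{e\}) \cup \{f\} \in \mathcal{B}^{*}$. Taking complements, this is equivalent to the following \emph{dual exchange} statement inside $(E,\mathcal{B})$: for every pair of bases $B_1,B_2$ and every $e \in B_2 \setminus B_1$, there exists $f \in B_1 \setminus B_2$ with $(B_1 \cup \{e\}) \setminus \{f\} \in \mathcal{B}$.

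To prove this, I would invoke the standard circuit structure of a matroid. Since $B_1$ is a basis and $e \notin B_1$, the set $B_1 \cup \{e\}$ contains a unique circuit $C$, and $e \in C$. I claim $C$ meets $B_1 \setminus B_2$: otherwise $C \setminus \{e\} \subseteq B_1 \cap B_2 \subseteq B_2$, and combined with $e \in B_2$ this gives $C \subseteq B_2$, contradicting the independence of $B_2$. Picking any $f \in C \cap (B_1 \setminus B_2)$, the set $(B_1 \cup \{e\}) \setminus \{f\}$ is independent, because removing $f$ destroys the unique circuit in $B_1 \cup \{e\}$; since its cardinality equals $|B_1|=r$, it is a basis, as required.

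Combining the steps, $(E,\mathcal{B}^{*})$ is a matroid, and the equivalence established at the outset identifies its family of independent sets with $\overline{\mathcal{Q}}$, completing the proof. The only delicate point I foresee is the dual exchange axiom, but the uniqueness of the fundamental circuit $C(e,B_1)$ combined with the pigeonhole-style argument against $C \subseteq B_2$ settles it cleanly, so I do not anticipate any serious obstacle.
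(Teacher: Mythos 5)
Your proof is correct. The paper does not actually prove this proposition---it is imported verbatim from Schrijver's book (Section~39.2)---and your argument, which identifies $\overline{\mathcal{Q}}$ with the downward closure of the complemented basis family $\mathcal{B}^{*}=\{E\setminus B : B\in\mathcal{B}\}$ and verifies the basis exchange axiom for $\mathcal{B}^{*}$ via the unique fundamental circuit of $B_1\cup\{e\}$, is precisely the standard textbook proof of matroid duality, so it matches the cited source rather than diverging from anything in the paper.
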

\noindent
In particular, both $\overline{\mathcal{Q}}_r$ and $\overline{\mathcal{Q}}_b$
are independent set families of dual matroids
of $(E, \mathcal{B}_r)$ and $(E, \mathcal{B}_b)$, respectively.  From Proposition~\ref{prop:matroid} it follows that (P1) is an instance of the matroid intersection problem, for which a polynomial-time algorithm exists.

\begin{theorem}\label{theorem:general}
Let $S$ be a set of $n$ red, blue, and purple points
located in the plane. A minimum RBP spanning graph of $S$ can be computed in $O(n^6)$ time.
\end{theorem}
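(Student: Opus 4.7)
The excerpt has already done the conceptual work: problem (P0) has been rephrased as the weighted matroid intersection instance (P1) on the ground set $E$ of all edges of the complete graph $G$ on $S$. My plan is therefore to just invoke a known polynomial-time algorithm for weighted matroid intersection, keep careful track of the sizes involved, and translate the answer back to (P0) by complementation.

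First I would record the basic parameters. The ground set has size $|E| = \binom{n}{2} \in O(n^2)$, and to run weighted matroid intersection I need independence (or rank) oracles for the two dual matroids $(E, \overline{\mathcal{Q}}_r)$ and $(E, \overline{\mathcal{Q}}_b)$. By definition $Y \in \overline{\mathcal{Q}}_r$ iff $E \setminus Y$ still contains a spanning tree of $R \cup P$, i.e.\ iff the subgraph of $G$ induced on $R \cup P$ with edge set $(E \setminus Y) \cap \binom{R \cup P}{2}$ is connected. This can be decided by a single BFS/DFS in $O(|E|) = O(n^2)$ time, and the oracle for $\overline{\mathcal{Q}}_b$ is symmetric. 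Computing the rank is done analogously by counting connected components.

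Next I would apply the weighted matroid intersection algorithm described in Chapter~41 of Schrijver's book to (P1), using these oracles. A standard bound on its running time for a ground set of size $m$ with polynomially bounded oracle cost yields $O(m^3)$ total time when the oracle calls are done efficiently; substituting $m = |E| \in O(n^2)$ gives the claimed $O(n^6)$. Since the algorithm returns a maximum-weight $Y^* \in \overline{\mathcal{Q}}_r \cap \overline{\mathcal{Q}}_b$, the complementation identity $w(Y^*) = w(E) - w(E \setminus Y^*)$ together with the equivalence between (P0) and (P1) shown in the preceding paragraphs guarantees that $X^* = E \setminus Y^*$ is a minimum-weight element of $\mathcal{Q}_r \cap \mathcal{Q}_b$, hence a minimum RBP spanning graph.

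The step I expect to be the most delicate is the running-time bookkeeping rather than any conceptual obstacle: correctness of the reduction is already in place, and the oracles are clearly polynomial, but matroid intersection admits several variants with different exponents (some stated per oracle call, some including the oracle cost). I would therefore be careful to cite a variant whose $O(m^3)$ bound already absorbs the graphic-oracle cost via incremental updates, so that plugging in $m \in O(n^2)$ legitimately delivers $O(n^6)$ without an extra oracle factor.
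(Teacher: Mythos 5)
Your proposal is correct and follows essentially the same route as the paper: reduce (P0) to the weighted matroid intersection instance (P1) on the dual graphic matroids and apply the augmenting algorithm from Schrijver's book, with ground set size $m\in O(n^2)$ yielding $O(n^6)$. The only difference is presentational --- where you invoke a black-box $O(m^3)$ bound and flag the oracle-cost bookkeeping as the delicate point, the paper resolves exactly that point by unrolling the algorithm into $O(m)$ iterations, each a single-source shortest-path computation on an auxiliary graph with $O(m)$ vertices and $O(m^2)$ arcs, giving $O(m\cdot m^2)=O(n^6)$ with the graphic-matroid exchange tests absorbed into building that graph.
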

\begin{proof}
Given set $S$, we construct the equivalent (P1) problem instance and solve it using the augmenting algorithm (see \cite[Theorem~41.7]{schrijver-book}). We note that this a is well-known result, thus its correctness is already established. In the following we present an interpretation of the algorithm in terms of the original problem (P0) and bound the running time.

Since any RBP spanning graph is connected, it must have at least $n-1$ edges. Observe that the addition of edges maintains the RBP spanning property. Thus, we conclude that for every $i \in \{n-1, \ldots, |E|\}$ there exists an RBP spanning graph with exactly $i$ edges. Recall that, in our setting, two points are connected in $E$ if and only if they have the same color or one of the two is purple. In particular, we have $|E|= {|R|+|P| \choose 2}+ {|B|+|P| \choose 2} - {|P| \choose 2} \leq {|S| \choose 2} \in O(n^2)$.

Starting from $i=|E|$ and continuing with smaller values of $i$, the algorithm finds an RBP connected subgraph $X_i$  with $i$ edges that minimizes the weight $w(X_i)$ (among those that contain exactly $i$ edges). Then, the subgraph $X_i$ of overall minimum weight will be a minimum RBP spanning subgraph.

\medskip

\noindent
{\bf Step 0:}
Set $i=|E|$ and $X_i := E$.
\\
{\bf Step 1:}
 Take a sequence of edges $e_1, e_2, \ldots, e_{2h+1}$ $(h \geq 0)$
such that
\begin{align*}
& e_1, e_3, \ldots, e_{2h+1} \in X_i,
\\
& e_2, e_4, \ldots, e_{2h} \in E \setminus X_i,
\\
& X_i - e_{1} \mbox{ induces a connected subgraph for the blue and purple points},
\\
& X_i - e_{2j-1} + e_{2j} \mbox{ induces a connected subgraph for the red and purple points }
j=1,2,\ldots, h,
\\
& X_i + e_{2j} - e_{2j+1} \mbox{ induces a connected subgraph for the blue and purple points for }
j=1,2,\ldots, h,
\\
& X_i - e_{2h+1} \mbox{ induces a connected subgraph for the red and purple points}.
\end{align*}

Such a sequence will exist provided that $i\geq n$ (i.e., we can remove an edge in a cycle or add a purple edge and remove both a red and a blue edge). Among such sequences, take the one with
the minimum value of
$ -\sum_{j=0}^{h}w(e_{2j+1}) + \sum_{j=1}^{h}w(e_{2j}). $
If there exist many such sequences, then take the one with
the minimum number of edges.
\\
{\bf Step 2:}
 Set
$ X_{i-1}:= X_i \setminus \{e_1, e_3, \ldots, e_{2h+1}\} \cup
\{e_2, e_4, \ldots, e_{2h}\}$
and $i:=i-1$.
 Go to Step 1.

\medskip

 Due to the choice of a sequence in Step 1, we can show that
if $X_i$ is an RBP connected subgraph  with $|X_i|=i$
with the minimum weight among those with exactly $i$ edges, then
$X_{i-1}$ will be an RBP connected subgraph with $i-1$ edges
having the minimum weight among those with $i-1$ edges (see \cite[Section 41.3]{schrijver-book}).
 Step 0 initializes with the unique graph with exactly $|E|$ edges. Since it is unique, it has the minimum weight, and thus each $X_i$ is an RBP connected subgraph with minimum weight among those containing $i$ edges.

We analyze the time complexity of the algorithm above. Let $n=|V|$ and $m=|E|$.
 The number of iterations of the algorithm is $m-(n-1)\in O(n^2)$.
 The sequence of edges in Step 2 can be computed
by solving a single-source shortest path problem
on an auxiliary edge-weighted graph. Thus, overall the algorithm terminates in
$O(m \cdot \mbox{SSP}(N,M))$ time, , where $N$ (resp., $M$) denote the number of vertices (resp., edges) in the auxiliary edge-weighted graph, and $\mbox{SSP}(n',m')$ denotes the time complexity for
solving a single-source shortest path problem on a graph
with $n'$ vertices and $m'$ edges. 

In our case, $m\in O(n^2)$, $N=O(m)=O(n^2)$, $M=O(m^2)=O(n^4)$ and we can solve the shortest path problem using Dijsktra's algorithm in $O(M+N\log N)=O(n^4)$ time.\footnote{Note that in the auxiliary graph some edges may have negative weight. Thus, the straightforwards Dijkstra's approach cannot be used. Nevertheless, this difficulty can be avoided by assigning a potential to each node. For more details see~\cite{schrijver-book}.} Thus, the overall running time is bounded by $O(n^6)$, as claimed.
\end{proof}

\section{Approximation}\label{sec:approx}

Given the high running time of the exact algorithm presented in the previous section, it is worth considering more efficient approximation algorithms.
A simple approximation algorithm determines the red-purple
minimum spanning tree and the blue-purple minimum spanning tree independently,
and takes the union of their edges. It is easy to see that this
is a $2$-approximation algorithm that requires $O(n\log n)$ time.

Another approximation algorithm, $A$, starts by computing the
minimum spanning tree of the purple edges, and then adds the
red and blue points in an optimal manner in the style of Kruskal's
algorithm for minimum spanning trees. Algorithm $A$ can also
be implemented to run in $O(n\log n)$ time by computing the
Delaunay triangulation of the red and purple points and of the blue and purple points.
%
%
It is easy to argue that $A$ also is a $2$-approximation algorithm
but interestingly, we can prove a better bound (close to $1.6$)
by expressing the approximation factor in the Steiner ratio $\rho$ (the ratio between the length of a minimum spanning tree and the length of a  minimum Steiner tree).
Gilbert and Pollak~\cite{gp-smt-68} conjectured that
$\rho=\frac {2}{\sqrt{3}}\approx 1.15$, but this conjecture has not been proved
yet.\footnote{A proof of the conjecture by Du and Hwang,
``A proof of Gilbert-Pollak Conjecture on the Steiner ratio'', \emph{Algorithmica} 7:121--135 (1992), turned out to be incorrect.}
Chung and Graham~\cite{cg-nbesmt-85} showed a bound of $\approx 1.21$,
which is currently the best-known upper bound on $\rho$.

\begin{theorem}
Approximation algorithm $A$ is a $(\frac 12\rho+1)$-approximation of the
minimum RBP spanning graph, where $\rho$ is the Steiner ratio.
\end{theorem}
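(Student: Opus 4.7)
The plan is to upper bound $w(A)$ by $\bigl(\tfrac{\rho}{2}+1\bigr)W^*$, where $W^*$ denotes the weight of an optimal RBP spanning graph $G^*$. First I would decompose $G^*$ along its two monochromatic subtrees. Let $T_R^* := G^*[R\cup P]$ and $T_B^* := G^*[B\cup P]$; by the observations of Section~\ref{sec:properties} each is a spanning tree of its respective vertex set. Because $G^*$ contains no red--blue edges, the intersection $T_R^* \cap T_B^*$ is exactly the set $E_P^*$ of purple-purple edges of $G^*$, and writing $w_P^* := w(E_P^*)$ one obtains the identity
\[
W^* \;=\; w(T_R^*) + w(T_B^*) - w_P^*.
\]

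Next I would analyse the algorithm directly. Its output has weight $w(A) = w(MST(P)) + C_R + C_B$, where $C_R$ (respectively $C_B$) is the minimum additional cost required to extend $MST(P)$ into a spanning graph of $R\cup P$ (respectively $B\cup P$). The crucial estimate is
\[
C_R \;\leq\; w(T_R^*) - w_P^*,
\]
and symmetrically for $C_B$. To establish this, I would consider $F_R := T_R^*\setminus E_P^*$, the set of red-red and red-purple edges of $G^*$, which has weight $w(T_R^*) - w_P^*$; the claim then reduces to showing that $F_R \cup MST(P)$ spans $R\cup P$. This holds because any red vertex can still follow what remains of its tree-path in $T_R^*$ to reach a purple vertex (each removed purple-purple edge cuts a path only between already-purple endpoints), and all purple vertices are interconnected through $MST(P)$. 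Summing the two estimates yields
\[
w(A) \;\leq\; w(MST(P)) + W^* - w_P^*.
\]

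The third step is to control $w(MST(P))$ via the Steiner ratio. Each of $T_R^*$ and $T_B^*$ is a tree in the Euclidean plane whose vertex set contains $P$, with the red (respectively blue) vertices acting as Steiner points, so $SMT(P) \leq \min\{w(T_R^*),\, w(T_B^*)\} \leq \tfrac{1}{2}(w(T_R^*) + w(T_B^*)) = \tfrac{1}{2}(W^* + w_P^*)$. Multiplying by $\rho$ and substituting into the previous bound gives
\[
w(A) \;\leq\; \tfrac{\rho}{2}(W^* + w_P^*) + W^* - w_P^* \;=\; \bigl(\tfrac{\rho}{2}+1\bigr)W^* + \bigl(\tfrac{\rho}{2}-1\bigr)w_P^*.
\]
Since the Steiner ratio satisfies $\rho \leq 2$ (by the standard tour-doubling argument) and $w_P^* \geq 0$, the second summand is non-positive, which yields the claimed approximation factor.

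The main obstacle I expect is getting the inequality $C_R \leq w(T_R^*) - w_P^*$ cleanly. It rests on two facts that I would want to spell out explicitly: first, the purple-purple edges of $T_R^*$ coincide with those of $G^*$, which follows at once from $T_R^*$ being an induced subgraph on $R\cup P$; and second, the small connectivity check that removing those edges and replacing them with $MST(P)$ preserves spanning of $R\cup P$. Once this is in place the rest is straightforward algebra together with the two standard facts about Steiner trees used above.
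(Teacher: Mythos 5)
Your proof is correct, and for the harder case (when $G^*$ contains purple edges) it takes a genuinely different route from the paper's. The shared core is the same: the algorithm's red and blue extension costs are bounded by the red and blue edge weights of $G^*$ (your connectivity check that $F_R\cup \MST(P)$ still spans $R\cup P$ is precisely the justification the paper leaves implicit in its inequality $||R'||\leq ||R^*||$), and $\MST(P)$ is controlled via the Steiner ratio against structures inside $G^*$. Where you diverge is in how the purple edges $P^*$ of the optimum are absorbed: the paper first reduces to instances whose optimal solution has no leaves, then invokes a Steiner-minimum-forest bound on the purple components to get $||P'||\leq ||P^*||+\rho\,||R^*||$, and finally argues that the resulting ratio is maximized when $||P^*||=0$; you instead use the exact identity $W^*=w(T_R^*)+w(T_B^*)-w_P^*$, the coarser bound $\SMT(P)\leq\min\{w(T_R^*),w(T_B^*)\}$ applied to the whole trees (purple edges counted as part of the Steiner tree), and the elementary fact $\rho\leq 2$ to make the leftover term $(\frac{\rho}{2}-1)\,w_P^*$ non-positive. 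Your version buys a cleaner, case-free argument that avoids the least rigorous steps of the paper's proof (the leaf-removal reduction and the informal Steiner-forest-of-components bound), at the mild cost of invoking $\rho\leq 2$, which is standard and in any case implied by the Chung--Graham bound the paper cites; the paper's finer inequality would in principle yield a strictly better ratio whenever $w_P^*>0$, but both arguments deliver exactly the claimed $(\frac{1}{2}\rho+1)$ factor. One sentence worth adding to make your connectivity step airtight: every component of $F_R$ containing a red vertex also contains a purple vertex, because any edge of $T_R^*$ leaving an all-red vertex set is incident to a red endpoint and hence is not a purple--purple edge, so it was not removed.
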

\begin{proof}
Let $R$, $B$, and $P$ be sets of red, blue, and purple points.
Let $G^*$ be a minimum RBP spanning graph.
Let $R^*$ be the red edges, $B^*$ the blue edges, and $P^*$ the purple edges in $G^*$.
Recall that  $A$ computes a spanning graph by taking the
minimum spanning tree of the purple points, and then adding the red and
blue points optimally. We denote the resulting graph on $R\cup B\cup P$
by $G'$, and its red, blue, and purple edges by $R'$, $B'$, and $P'$.



Suppose first that $G^*$ has no purple edges.
Then algorithm $A$ gives extra length in terms of purple
edges equal to the MST of the purple points, denoted $||P'||$.
The optimal graph $G^*$
must connect all purple points through a red spanning tree
and through a blue spanning tree whose lengths are $||R^*||$ and $||B^*||$.
Algorithm $A$ has a total length of red edges of $||R'||\leq ||R^*||$
and a total length of blue edges of $||B'||\leq ||B^*||$.
Hence the approximation ratio of $A$ in case of absence of purple edges in
the optimal solution is
\[\frac{||P'||+||R'||+||B'||}{||R^*||+||B^*||}\leq
\frac{||P'||+||R^*||+||B^*||}{||R^*||+||B^*||}\,.\]
Since $R^*$ connects all purple points, $||R^*||\geq \SMT(P)$, and similarly, $||B^*||\geq \SMT(P)$, where
$\SMT(P)$ is the length of the Steiner Minimum Tree of $P$.
The ratio is maximized when $||R^*||$ and $||B^*||$ are as small
as possible, so the ratio is upper-bounded by
\[\frac{\MST(P)+2\cdot \SMT(P)}{2\cdot \SMT(P)}
\leq \frac{\rho\cdot\SMT(P)+2\cdot \SMT(P)}{2\cdot \SMT(P)}
= \frac{\rho+2}{2}=\frac 12 \rho+1 \,,\]
where $\MST(P)$ is the length of the Minimum Spanning Tree of $P$.

Next, suppose that $G^*$ has a set $P^*$ of purple edges, and assume them fixed.
We will reason about sets of red, blue, and purple points for which the
algorithm $A$ performs as poorly as possible in terms of approximation ratio.

If $G^*$ has any red point $r$ that has a single red edge incident to it in $R^*$,
then this edge will connect $r$ to the closest red or purple point, otherwise
$G^*$ is not optimal. Algorithm $A$ will choose exactly the same edge in its
solution. Hence, the approximation ratio of $A$ for the points
$R\setminus\{r\}$, $B$, and $P$ is higher than for the points $R$, $B$, and $P$.
The same is true for a blue or purple point that has a single incident edge
in $G^*$. So we can restrict ourselves to analyzing point sets whose
optimal solution does not have any leafs in $G^*$.

The edges of $P^*$ partition the purple points of $P$ into a
number of purple components which are connected by a red spanning forest and
a blue spanning forest, and by the observations above, there are no other red or blue points
in $G^*$. We have
\[||P'||\leq ||P^*|| + \rho ||R^*||\;\mbox{ and }\; ||P'||\leq ||P^*|| + \rho ||B^*||\;,\]
because in $G^*$ the red (blue) connections between the purple components
cannot be shorter than the Steiner Minimum Forest of the purple components.
We can assume by symmetry that $||R^*||\leq ||B^*||$.
The approximation ratio then is
\[\frac{||P'||+||R'||+||B'||}{||P^*||+||R^*||+||B^*||}\leq
\frac{||P'||+||R^*||+||B^*||}{||P^*||+||R^*||+||B^*||}\leq
\frac{||P'||+ 2||R^*||}{||P^*||+2||R^*||}\leq
\frac{||P^*||+(2+\rho) ||R^*||}{||P^*||+2||R^*||}\,.\]
This ratio is maximal when $||P^*||=0$, in which case we get
exactly the same ratio as above, when no purple edges are present.
\end{proof}

\begin{figure}[tb]
\centering
\includegraphics{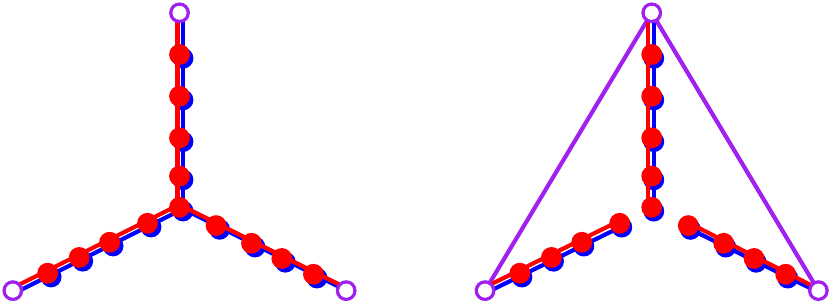}
\caption{A minimum RBP spanning graph and the RBP spanning graph
on the same points obtained by approximation algorithm $A$, resulting in an approximation factor that tends to $1+\frac{1}{\sqrt{3}}$.}
\label{fig:steiner}
\end{figure}
We note that $A$ cannot be a $c$-approximation for any $c< 1+\frac{1}{\sqrt{3}} \approx 1.58$ (a counterexample is shown in Figure~\ref{fig:steiner}). Hence our approximation analysis is tight if the Gilbert-Pollak conjecture is true. It is possible that a PTAS exists for our problem, but it is not clear
whether the techniques of Arora~\cite{a-ptase-98} or Mitchell~\cite{m-gsaps-99}
for the Euclidean traveling salesperson problem can be applied, since RBP spanning graphs are not planar, and the number of crossings of a single edge can be large.

\section{Extensions and Future Work}\label{sec:extensions}
{\bf Line drawings.} Motivated by LineSets~\cite{alper11}, we could extend the ideas of this paper as follows: given the set of red, blue, and purple points, we now wish to compute a minimum RBP spanning graph such that the subgraphs induced by the red and blue sets contain a spanning paths. That is, the red and purple edges form a path connecting all red and purple points, and the blue and purple edges form a path connecting all blue and purple points (see Figure~\ref{fig:extensions} (a)).

If we are interested in minimizing the total length, the problem becomes NP-hard, since it can be seen as a generalization of the traveling salesman problem (TSP). Nonetheless, we can obtain a $(2+\eps)$-approximation by independently computing an approximate TSP for the blue and purple points and for the red and purple points, and simply taking the union. An approach similar to the spanning tree case seems to fail and hence a better solution remains an open problem. The question also remains open for points on a line or a circle.

\smallskip\noindent
{\bf Beyond purple.} In this paper we considered the case where there are exactly two types of points, leading to an input with red, blue, and purple points. In general, we might have $k$ different sets, all denoted by primary colors. For instance, for $k=3$ we could have red, blue, and yellow sets, which leads to three secondary colors (purple, orange, green) and one tertiary color (black). The objective is again to minimize the total length of a multi-colored spanning graph that has the property that the subgraphs induced by the red, blue, and yellow sets are connected (see Figure~\ref{fig:extensions} (b)).

The $2$-approximation immediately generalizes to a $3$-approximation (or a $k$-approximation for $k$ primary colors).
We can improve on this by incorporating our $(1+\frac12\rho)$-approximation algorithm to obtain a $(2+\frac12\rho)$-approximation for three sets, or more generally a $(\lceil \frac12k \rceil + \lfloor \frac12k \rfloor \frac12\rho)$-approximation for $k$ sets.
Interestingly, our algorithms for points on a line or on a circle are not straightforward to generalize; these problems remain open.

\begin{figure}[h]
\centering
\subfigure[]{{\includegraphics[scale=.7]{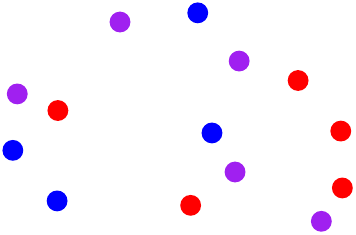}}
\hspace{10pt}
{\includegraphics[scale=.7]{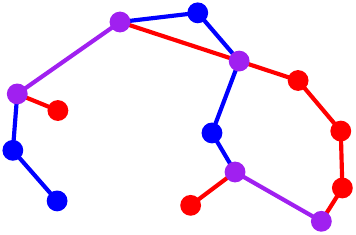}}}
\hfill
\subfigure[]{{\includegraphics[scale=.7]{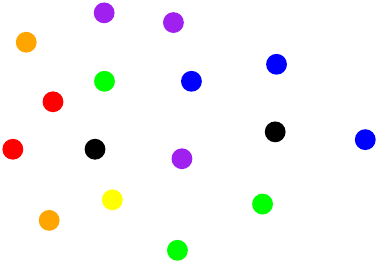}}
\hspace{10pt}
{\includegraphics[scale=.7]{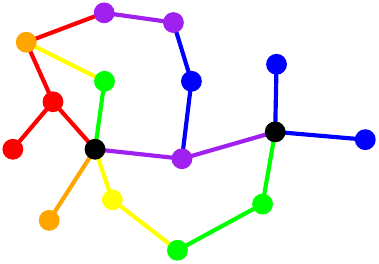}}}
\caption{(a) A set of red, blue, and purple points, and a graph that connects all red points in a path and all blue points in a path.
(b) A set of multicolored points representing red, blue, and yellow sets, and a corresponding spanning graph.
  }
  \label{fig:extensions}
  \vspace{-.5\baselineskip}
\end{figure}

\section*{Acknowledgements}
F. H., M. K., V. S., and R.~I.~S. were partially supported by ESF EUROCORES programme EuroGIGA, CRP ComPoSe: grant
EUI-EURC-2011-4306, and by project MINECO MTM2012-30951. F. H., V. S., and R.I. S. were  supported by projects Gen. Cat. DGR 2009SGR1040 and DGR 2014SGR46.
M.~K. was supported by ELC project (MEXT KAKENHI No. 24106008). M.L. and B.S. were supported by the Netherlands Organisation for Scientific Research (NWO) under project numbers 639.021.123 and 639.023.208, respectively.
A.~S. is supported by JSPS/MEXT KAKENHI Grant Numbers 24500002, 25106503. R.~I.~S. was funded by Portuguese funds through CIDMA and FCT, within project PEst-OE/MAT/UI4106/2014, by FCT grant SFRH/BPD/ 88455/2012, and by Spanish MINECO through the Ram{\'o}n y Cajal program. T.~T. is supported by ELC project of Grant-in-Aid for Scientific Research on Innovative Areas 24106007, Grant-in-Aid for Scientific Research (B) 40312631, Grant-in-Aid for Exploratory Research 24650001, JSPS Grant Scientific Research (B) 15H02665, MEXT Japan, and Kawarabayashi Big Graph ERATO project, Japan Science and Technology Agency. The authors would like to thank Hugo Alves Akitaya for his input in the analysis of the algorithm in Section 4.
\small

\bibliographystyle{abbrv}
\bibliography{redbluepurplerefs}

\begin{thebibliography}{10}

\bibitem{AHIKLMPS01}
M.~Abellanas, F.~Hurtado, C.~Icking, R.~Klein, E.~Langetepe, L.~Ma, B.~Palop,
  and V.~Sacrist{\'a}n.
\newblock Smallest color-spanning objects.
\newblock In {\em Proc. 9th European Symposium on Algorithms}, LNCS 2161, pages
  278--289, 2001.

\bibitem{AES91}
P.~K. Agarwal, H.~Edelsbrunner, and O.~Schwarzkopf.
\newblock Euclidean minimum spanning trees and bichromatic closest pairs.
\newblock {\em Discrete Comput. Geom}, 6:407--422, 1991.

\bibitem{AGM02}
P.~K. Agarwal, S.~Govindarajan, and S.~Muthukrishnan.
\newblock Range searching in categorical data: Colored range searching on grid.
\newblock In {\em Proc. 10th European Symposium on Algorithms}, LNCS 2461,
  pages 17--28, 2002.

\bibitem{alper11}
B.~Alper, N.~Riche, G.~Ramos, and M.~Czerwinski.
\newblock Design study of {LineSets}, a novel set visualization technique.
\newblock {\em IEEE T Vis Comput Gr}, 17(12):2259--2267, 2011.

\bibitem{Alsallakh2014}
B.~Alsallakh, L.~Micallef, W.~Aigner, H.~Hauser, S.~Miksch, and P.~Rodgers.
\newblock Visualizing sets and set-typed data: State-of-the-art and future
  challenges.
\newblock In {\em Proc. Eurographics Conference on Visualization}, 2014.

\bibitem{a-ptase-98}
S.~Arora.
\newblock Polynomial time approximation schemes for euclidean traveling
  salesman and other geometric problems.
\newblock {\em J. ACM}, 45(5):753--782, 1998.

\bibitem{BKLLMSV09}
M.~G. Borgelt, M.~van Kreveld, M.~L{\"{o}}ffler, J.~Luo, D.~Merrick, R.~I.
  Silveira, and M.~Vahedi.
\newblock Planar bichromatic minimum spanning trees.
\newblock {\em J. Discrete Algorithms}, 7(4):469--478, 2009.

\bibitem{Brandes2010PathBasedSupportsJ}
U.~Brandes, S.~Cornelsen, B.~Pampel, and A.~Sallaberry.
\newblock Path-based supports for hypergraphs.
\newblock {\em J. Discrete Algorithms}, 14:248--261, 2012.

\bibitem{BMP05}
P.~Brass, W.~O.~J. Moser, and J.~Pach.
\newblock {\em Research Problems in Discrete Geometry}.
\newblock Springer, 2005.

\bibitem{Buchin2011PlanarSupports}
K.~Buchin, M.~van Kreveld, H.~Meijer, B.~Speckmann, and K.~Verbeek.
\newblock On planar supports for hypergraphs.
\newblock {\em J. Graph Algorithms Appl.}, 14(4):533--549,, 2011.

\bibitem{Byelas2008AreasOfInterest}
H.~Byelas and A.~Telea.
\newblock Towards realism in drawing areas of interest on architecture
  diagrams.
\newblock {\em Visual Languages and Computing}, 20(2):110--128, 2009.

\bibitem{cg-nbesmt-85}
F.~Chung and R.~Graham.
\newblock A new bound for {E}uclidean {S}teiner minimum trees.
\newblock {\em Ann. N.Y. Acad. Sci.}, 440:328--346, 1986.

\bibitem{bubblesets}
C.~Collins, G.~Penn, and S.~Carpendale.
\newblock Bubble {S}ets: Revealing set relations with isocontours over existing
  visualizations.
\newblock {\em IEEE T Vis Comput Gr}, 15(6):1009--1016, 2009.

\bibitem{kelp}
K.~Dinkla, M.~van Kreveld, B.~Speckmann, and M.~A. Westenberg.
\newblock Kelp {D}iagrams: Point set membership visualization.
\newblock {\em Comput. Graph. Forum}, 31(3):875–--884, 2012.

\bibitem{Edwards2004Cogwheels}
A.~W.~F. Edwards.
\newblock {\em Cogwheels of the Mind}.
\newblock John Hopkins University Press, 2004.

\bibitem{gp-smt-68}
E.~Gilbert and H.~Pollak.
\newblock Steiner minimal trees.
\newblock {\em SIAM J. Appl. Math.}, 16:1–--29, 1968.

\bibitem{GH93}
T.~Graf and K.~H. Hinrichs.
\newblock Algorithms for proximity problems on colored point sets.
\newblock In {\em Proc. 5th Canadian Conference on Computational Geometry},
  pages 420--425, 1993.

\bibitem{Riche2010UntanglingEulerDiagrams}
N.~{Henry~Riche} and T.~Dwyer.
\newblock Untangling {Euler} diagrams.
\newblock {\em IEEE T Vis Comput Gr}, 16(6):1090--1099, 2010.

\bibitem{hkvlsss-csgsv-13}
F.~Hurtado, M.~Korman, M.~van Kreveld, M.~L{\"o}ffler, V.~Sacrist{\'a}n, R.~I.
  Silveira, and B.~Speckmann.
\newblock Colored spanning graphs for set visualization.
\newblock In {\em Graph Drawing, 21st International Symposium}, LNCS 8242,
  pages 280--291. 2013.

\bibitem{KK03}
A.~Kaneko and M.~Kano.
\newblock Discrete geometry on red and blue points in the plane--a survey.
\newblock In {\em Discrete Comput. Geom: The Goodman-Pollack Festschrift},
  pages 551--570, 2003.

\bibitem{kaufmann09}
M.~Kaufmann, M.~van Kreveld, and B.~Speckmann.
\newblock Subdivision drawings of hypergraphs.
\newblock In {\em Graph Drawing, 16th International Symposium}, LNCS 5417,
  pages 396--407, 2009.

\bibitem{kelpfusion}
W.~Meulemans, N.~{Henry Riche}, B.~Speckmann, B.~Alper, and T.~Dwyer.
\newblock Kelp{F}usion: a hybrid set visualization technique.
\newblock {\em IEEE T Vis Comput Gr}, 19(11):1846--1858, 2013.

\bibitem{Mitchell98}
J.~S.~B. Mitchell.
\newblock Geometric shortest paths and network optimization.
\newblock In {\em Handbook of Computational Geometry}, pages 633--701, 1998.

\bibitem{m-gsaps-99}
J.~S.~B. Mitchell.
\newblock Guillotine subdivisions approximate polygonal subdivisions: A simple
  polynomial-time approximation scheme for geometric {TSP}, k-{MST}, and
  related problems.
\newblock {\em SIAM J. Comput.}, 28(4):1298--1309, 1999.

\bibitem{MNSW00}
D.~M. Mount, N.~S. Netanyahu, R.~Silverman, and A.~Y. Wu.
\newblock Chromatic nearest neighbor searching: A query sensitive approach.
\newblock {\em Comput. Geom. Theory Appl.}, 17(3--4):97--119, 2000.

\bibitem{Pre85}
F.~P. Preparata and M.~I. Shamos.
\newblock {\em Computational Geometry: An Introduction}.
\newblock Springer, 1985.

\bibitem{schrijver-book}
A.~Schrijver.
\newblock {\em Combinatorial Optimization - Polyhedra and Efficiency}.
\newblock Springer, 2003.

\bibitem{Simonetto2008UndrawableEulerDiagrams}
P.~Simonetto and D.~Auber.
\newblock Visualise undrawable {E}uler diagrams.
\newblock In {\em Proc. 12th Conference on Information Visualisation}, pages
  594--599, 2008.

\bibitem{Stapleton2011InductivelyGeneratingEulerDiagrams}
G.~Stapleton, P.~Rodgers, J.~Howse, and L.~Zhang.
\newblock Inductively generating {E}uler diagrams.
\newblock {\em IEEE T Vis Comput Gr}, 17(1):88--100, 2011.

\bibitem{Tokunaga1996}
S.~Tokunaga.
\newblock Intersection number of two connected geometric graphs.
\newblock {\em Inform. Process. Lett.}, 59(6):331--333, 1996.

\bibitem{tufte}
E.~R. Tufte.
\newblock {\em The Visual Display of Quantitative Information}.
\newblock Graphics Press, 1983.

\bibitem{Wu20061}
W.~Wu, H.~Du, X.~Jia, Y.~Li, and S.~C.-H. Huang.
\newblock Minimum connected dominating sets and maximal independent sets in
  unit disk graphs.
\newblock {\em Theoret. Comput. Sci.}, 352(1--3):1--7, 2006.

\end{thebibliography}

\clearpage
\normalsize
\appendix
\section{Proof of Proposition~\ref{prop:MartiniGlass}}
In this section, we present a problem instance for which the optimal solution contains a purple edge that is crossed by $\Theta(n)$ other purple edges. Before giving the exact construction, we present a relation between MST and RBP spanning trees, which we find interesting on its own. Then we describe the construction, we argue about parts of it, and finally prove that this construction indeed satisfies our claim.

For simplicity in the exposition, in this section we assume that the points are in {\em general position}: we require that there exist no two distinct pairs of points whose distances are equal (i.e., $\forall u,v,u',v'\in R\cup B\cup P$ it holds that $d(u,v)=d(u',v') \Leftrightarrow \{u,v\}=\{u',v'\}$).
This general position assumption implies uniqueness of minimum spanning trees, and thus greatly simplifies the statements below.
To satisfy the general position requirement, it suffices to do a symbolic perturbation of the input.
Alternatively, the general position requirement can be dropped if the statement of Lemma~\ref{lem_edgesmst} is slightly weakened to existence (rather than universality).

Given a problem instance $(R,B,P)$, let $T_R=MST(R\cup P)$, and $T_B=MST(B\cup P)$. Clearly, not all edges of $T_R$ or $T_B$ need to be present in $G^*$ (see examples in Figure~\ref{fig:spanningexamples}). However, in the next lemma we show that most of them will.

We say that a subset $S'$ of $R \cup P$ is {\em red-maximal} if the subgraph of $T_R$ induced by $S'$, denoted $T_R[S']$, satisfies the following properties:
\begin{enumerate}
\item $T_R[S']$ is connected,
\item all purple points of $S'$ are leaves of $T_R[S']$,
\item any leaf of $T_R[S']$ that is not purple is also a leaf of $T_R$.
\end{enumerate}

Intuitively speaking, a red-maximal subset $S'$ is a maximal subset in which all points are connected in $T_R[S']$ through paths that never traverse purple points. The definition of \emph{blue-maximal} is analogous.


\begin{lemma}\label{lem_edgesmst}
If $S'$ is a red-maximal (resp. blue-maximal) subset with $k$ purple points of a problem instance, then the number of edges of $T_R[S']$ (resp. $T_B[S']$) that are not in $G^*$ is less than $k$.
\end{lemma}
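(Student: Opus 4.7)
The plan is to show that if the $m := |T_R[S'] \setminus G^*|$ missing edges are removed from the subtree $T_R[S']$, the resulting forest has $m+1$ components, each of which must contain at least one of the $k$ purple points of $S'$; by pigeonholing, $m+1 \le k$, giving $m \le k-1 < k$. The blue-maximal case is symmetric and reduces to the red-maximal one by interchanging the roles of $R$, $T_R$ with $B$, $T_B$.

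The first step is a structural observation: every edge of $G^*_R := G^*[R\cup P]$ that does not lie in $T_R$ is purple-purple. Indeed, if some $f \in G^*_R \setminus T_R$ is red-red or red-purple, then $f \notin G^*[B\cup P]$, so deleting $f$ from $G^*$ preserves blue connectivity. By the matroid exchange between the spanning trees $T_R$ and $G^*_R$ of $R \cup P$ there exists $e \in T_R \setminus G^*_R$ such that $(G^*_R - f) + e$ is again a spanning tree; the MST property of $T_R$ together with the general-position assumption gives $w(f) > w(e)$. Swapping $f$ for $e$ in $G^*$ then yields an RBP spanning graph of strictly smaller weight, contradicting optimality. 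A useful corollary is that any edge of $G^*_R$ between two vertices of $S'$ that is not in $T_R[S']$ must be purple-purple.

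The second and crucial step is to prove that each component $C_i$ of $T_R[S'] \setminus E^*$, where $E^* := T_R[S'] \setminus G^*$, contains at least one purple of $S'$. Suppose for contradiction that some $C_i$ is entirely red. Any $G^*_R$-edge between $C_i$ and another component $C_j$ within $S'$ would lie in $G^*_R[S'] \setminus T_R[S']$ and hence be purple-purple by the first step, which is impossible because its $C_i$-endpoint is red. Therefore $C_i$ can attach to the rest of $G^*_R$ only via boundary edges $(r,x) \in G^*_R$ with $r \in C_i$ and $x \notin S'$; by the first step each such $(r,x)$ lies in $T_R$, so $r$ has a $T_R$-neighbor outside $S'$. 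Condition 3 of red-maximality (non-purple leaves of $T_R[S']$ are leaves of $T_R$) then forces $r$ to be internal in $T_R[S']$. A matroid-style exchange between $T_R$ and $G^*_R$, comparing on one hand the $E^*$-edges incident to $C_i$ and on the other the boundary edges carrying $C_i$ out of $S'$, shows that rerouting $C_i$ through the outside of $S'$ is strictly more expensive than reinstating one of the missing $T_R[S']$-edges incident to $C_i$; the resulting swap preserves blue connectivity (the removed piece is not purple-purple, so it does not participate in $G^*[B\cup P]$) and strictly decreases the weight of $G^*$, contradicting optimality. Hence every $C_i$ contains at least one purple, so $m+1 \le k$ and the lemma follows.

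The main obstacle I anticipate is the exchange argument in the second step: one must verify that an all-red component $C_i$ connected only through boundary edges into $T_R$-neighbors outside $S'$ is never optimal, even though these boundary edges are themselves MST edges and the matroid swap that saves red weight (replacing a purple-purple edge by an $E^*$-edge) threatens to disconnect blue. The delicate part is balancing the weights of the boundary edges of $C_i$, its internal edges, and the missing $T_R[S']$-edges to produce a local modification of $G^*$ that both restores blue connectivity and strictly decreases the total weight; red-maximality, and in particular condition 3, is what rules out the degenerate cases (red leaves of $T_R[S']$ with external $T_R$-neighbors) that would otherwise block the exchange.
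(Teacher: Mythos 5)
Your skeleton is the same as the paper's: both arguments reduce the lemma to showing that, after the missing edges are deleted from $T_R[S']$, no resulting component can consist solely of red points, and then finish by pigeonhole. Your preliminary observation---that every edge of $G^*[R\cup P]$ outside $T_R$ must be purple--purple, via the symmetric base exchange between two spanning trees of $R\cup P$ together with uniqueness of the MST under general position---is correct and does not appear in the paper; it cleanly establishes that an all-red component $C_i$ can attach to the rest of the red tree only through $T_R$-edges whose far endpoints lie outside $S'$. But at exactly that point your proof stops. The decisive claim, that ``rerouting $C_i$ through the outside of $S'$ is strictly more expensive than reinstating one of the missing $T_R[S']$-edges incident to $C_i$,'' is asserted rather than proved, and you yourself flag it as the obstacle you have not overcome: those boundary edges are genuine MST edges, so no generic matroid exchange hands you a cheaper substitute for them, and you never identify which edge of $G^*$ is to be deleted nor why it is strictly longer than the edge you reinstate. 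That is the entire content of the lemma, so this is a genuine gap, not a routine verification.

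The paper closes the gap with a concrete closest-pair exchange. Take $p\in C$ and $q\in S'\setminus C$ realizing the minimum distance between $C$ and $S'\setminus C$. Since $T_R[S']$ is a connected subtree of the MST $T_R$, it is itself the minimum spanning tree of $S'$, so the cut property forces $pq\in T_R[S']$; and $pq\notin G^*$, since otherwise it would merge $C$ with another component. Now follow the unique path from $p$ to $q$ in the red tree $G^*[R\cup P]$ and let $v_{i-1}v_i$ be its first edge leaving $C$. Deleting $v_{i-1}v_i$ separates $p$ from $q$ in red, adding $pq$ reconnects them, blue connectivity is untouched because $v_{i-1}\in C$ is red, and the choice of $(p,q)$ together with general position gives $||pq||<||v_{i-1}v_i||$, contradicting optimality of $G^*$. (Your worry about edges escaping $S'$ is not unfounded---the paper is terse precisely here, since $v_i$ need not lie in $S'$---but this exchange, not a weight balance over all boundary edges of $C_i$, is the intended argument.) If you supply this step, your purple--purple observation becomes unnecessary, as the paper's proof never uses it.
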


\begin{figure}[b]
\centering
\includegraphics{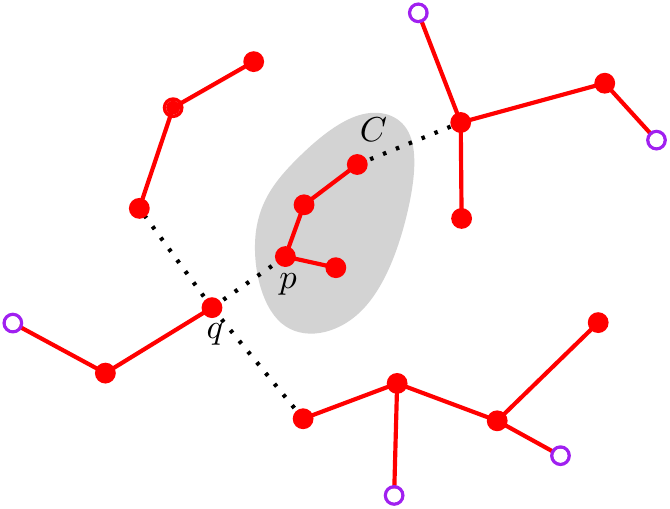}
\caption{Proof of Lemma~\ref{lem_edgesmst}: if many edges of $T_R$ are missing in $G^*$, we can replace one edge and obtain a RBP spanning graph of smaller weight. Edges of $G^*\cap T_R[S']$ are marked in solid, edges of $E$ are dashed, $C$ is the gray region.}
\label{fig-edgesmst}
\end{figure}

\begin{proof}
We prove the lemma for red-maximal subsets.
Let $E$ be the set of edges that are present in $T_R[S']$ but not in $G^*$. In the following we give a process to reduce the size of $E$ whenever it has $k$ or more edges. Since $T_R[S']$ is a tree, the removal of all edges in $E$ will create at least $k+1$ components. In particular, there must be a component $C$ that has no purple edge. Let $p\in C$, $q \in S'\setminus C$ be the two vertices that minimize the distance between points of $C$ and those in $S'\setminus C$. It is well known~\cite{Pre85} that $pq\in T_R[S']$. Moreover, by definition of $q$, the edge $pq$ cannot be in $G^*$, thus $pq\in E$ (see Figure~\ref{fig-edgesmst}).

Our aim is to replace some edge of $G^*$ by the edge $pq$. Recall that $G^*[R \cup P]$ forms a tree, hence there exists a unique path $\pi=(p=v_0, \ldots, v_m=q)$ connecting $p$ and $q$ in the red subgraph of $G^*$. Let $i$ be the smallest index such that $v_i\not\in C$ (note that $i<m$ since $pq\not \in G^*$). We claim that replacing edge $v_{i-1}v_i$ by edge $pq$ in $G^*$, we obtain another RBP spanning graph of smaller weight, giving a contradiction.

By definition of $p$ and $q$, and using the general position assumption, we have $||pq|| < ||v_{i-1}v_i||$. Thus, it remains to prove that this modification indeed preserves the RBP spanning property. Recall that, by hypothesis, all points of $C$ are red, thus the removal of edge $v_{i-1}v_i$ can only affect the spanning property of the red subgraph of $G^*$. By removing this edge, $p$ and $q$ must belong to two different components (since that edge was part of the only path connecting them). Thus, by adding the edge $pq$ we reconnect the two components as desired.
\end{proof}

\paragraph{Description of the construction}
The construction has two parts.
At a global level it consists of points located as illustrated in Figure~\ref{fig:MartiniGlass-Schematic} (left).

\begin{figure}[tp]
\centering
\includegraphics[page=2]{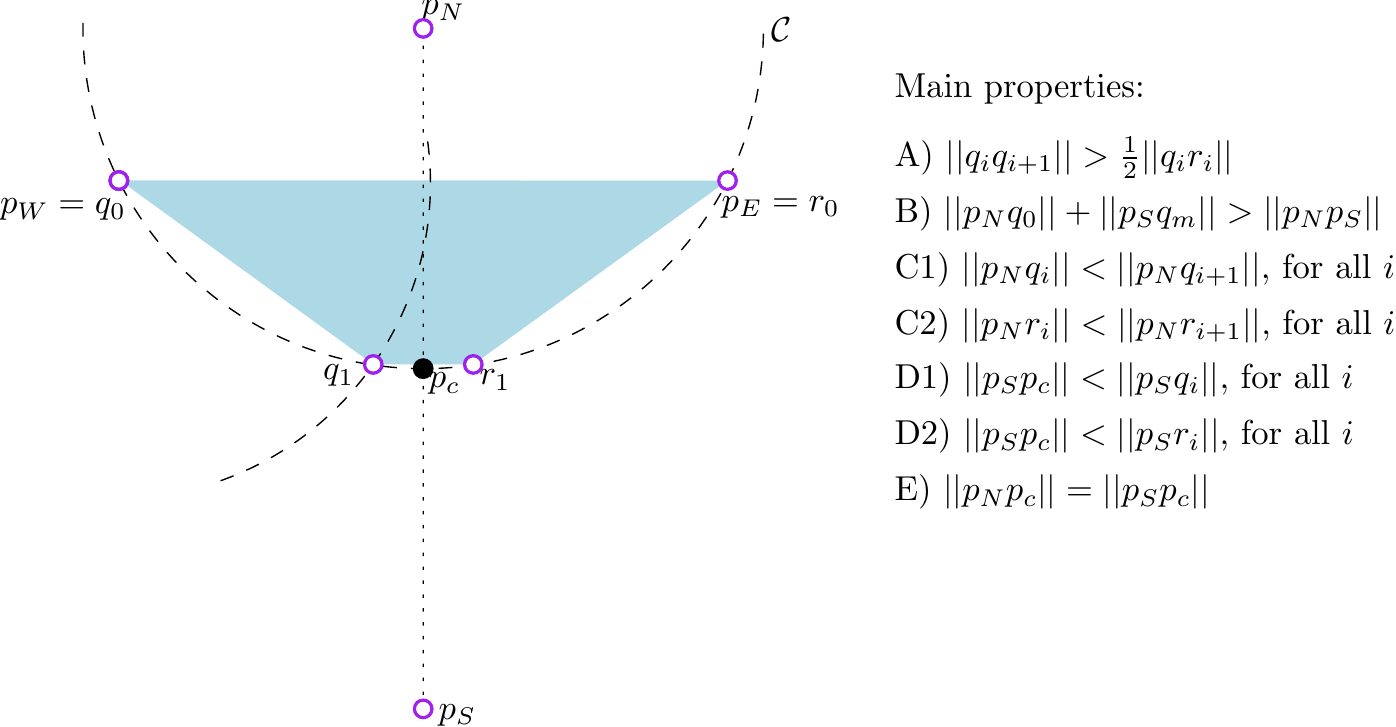}
\caption{Schematic view of the construction (left) and its optimal solution (right), with three levels. Note that this is not to scale. The actual proportions are shown in Figure~\ref{fig:MartiniGlass}.}
\label{fig:MartiniGlass-Schematic}
\end{figure}

Let $p_N=(x_0,y_0)$ be any point in the plane, let ${\cal C}_0$ be the unit circle centered at $p_N$,  and let $p_c$ be the bottom-most point of ${\cal C}_0$ (i.e., $p_c=(x_0,y_0-1)$).
Now define another unit circle $\cal C$, with center $(x_0,y_0-\eps_0)$, for a sufficiently small $\eps_0>0$.
We give more details on how to choose $\eps_0$ later in this section.
Note that the lower semicircle of $\cal C$ lies outside ${\cal C}_0$.
We place most points on the lower semicircle of $\cal C$, as shown in the figure.
Let $p_W$ and $p_E$ be two points with the same $y$-coordinate on the lower semicircle of $\cal C$.
Their exact position is not really important, although some remarks about it are given later.


In the construction we place $2(m+1)$ purple points on $\cal C$, in pairs with the same $y$-coordinate, for any odd value of $m \geq 1$.
On the arc of $\cal C$ going counterclockwise from $p_W$ towards $p_c$, purple points are named $p_W=q_0, q_1, \dots, q_m$, while on the opposite arc, they are named $p_E=r_0, r_1, \dots, r_m$.

Four ``consecutive'' purple points $q_i,r_i, q_{i+1}, r_{i+1}$ form a trapezoid defining what we call a \emph{level}. There are $m$ levels in total.
For each level, on the circular arc of $\cal C$ going counterclockwise from $q_i$ to $q_{i+1}$ (and, respectively, from $r_{i+1}$ to $r_{i}$) we place a chain of red or blue points, one color per side (see Figure~\ref{fig:MartiniGlass-Schematic}).
Each chain has its points very close to each other, so that in any optimal solution they will be connected by edges of their same color. For the sake of simplicity, it is possible to think of the two chains as having the same number of points, although this is not necessary for the construction to work.
Colors between consecutive levels alternate sides.

The exact location of $q_i$ and $r_i$ needs to be chosen carefully to guarantee that the optimal RBP spanning tree is as needed. The main properties of the construction are the following:
(i) $p_N$ is closer to $q_i$ and $r_i$ than to $q_{i+1}$ and $r_{i+1}$, and
(ii) edge $q_i q_{i+1}$ is longer than half of the edge $q_i r_i$.
To guarantee the latter, $q_{i+1}$ is placed at the intersection of two circles: $\cal C$ and the circle centered at $q_i$ with radius half of the length of $q_i r_i$ plus $\eps$, for a small constant $\eps>0$.
This latter property is essential to argue about the optimal spanning graph of a single level.
Figure~\ref{fig:MartiniGlass} shows the construction with proportions that satisfy these requirements.
More precisely, it shows the first level. Additional levels are nested recursively in the bottom side of the trapezoid, using line segment $q_1 r_1$ instead of $p_W p_E$.

\begin{figure}[h]
\centering
\includegraphics{MartiniGlass}
\caption{One single level of the main construction drawn with the right proportions.
The key properties of the construction are shown on the right.
More levels can be added recursively below line segment $q_1 r_1$.
}
\label{fig:MartiniGlass}
\end{figure}

In this way, if $\eps_0$ is small enough it holds that $p_c$ lies below all the four points of the trapezoids of all levels.

Once the $m$ levels have been added, we insert an additional purple point $p_S$ located vertically below $p_c$, so that $||p_cp_S||=||p_Np_c||$, i.e., at position $(x_0, y_0 - 2-2\eps_0)$. By construction of the points $q_i$ and $r_i$, we have $||p_N\nu_i||+||p_S\eta_j||>||p_Np_S||$ for any $\nu,\eta\in\{q,r\}$
 and $0 \leq i,j\leq m$. 
We complete the construction with  two chains of red and blue points connecting $p_N$ to $p_E$, and $q_m$ to $p_S$, respectively.

One last technical detail concerns the choice of $\eps_0$.
The only condition on this value is that it should be small enough as to guarantee that $y(q_m) > y(p_c)$.
Given that by construction $\eps_0=0$ always leaves a gap between $y(q_m)$ and $y(p_c)$, it is always possible to choose a suitable $\eps_0>0$ maintaining that condition.

Figure~\ref{fig:MartiniGlass-Schematic} (right) shows a RBP spanning tree that, we claim, is the only optimal solution for this configuration of points.
Note that the vertical purple edge is crossed by all the other horizontal purple edges, thus proving Proposition~\ref{prop:MartiniGlass}.

In order to prove Proposition~\ref{prop:MartiniGlass}, we first argue about the optimal RBP spanning graph for the subproblem of the points within the levels.

\begin{lemma}\label{lem:portion-of-martini}
The optimal solution for all points with $y$-coordinates between those of $p_W$ and $q_m$ consists of all horizontal purple edges $q_i r_i$ plus all chains of short red/blue edges, as shown in Figure~\ref{fig:MartiniGlass-Schematic} (right).
\end{lemma}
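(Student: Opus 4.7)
The plan is a local exchange argument: start from any optimal RBP spanning graph $G^\star$ for the level subproblem and show, in two independent steps, that it must coincide edge-for-edge with the structure shown in Figure~\ref{fig:MartiniGlass-Schematic} (right). First I would argue that, inside each chain, the red (or blue) edges of $G^\star$ are exactly the consecutive chain edges. The chain in level $i$ together with its two purple endpoints forms a red- or blue-maximal subset with only two purple points, so by Lemma~\ref{lem_edgesmst} at most $k-1=1$ of its MST edges can be missing from $G^\star$. A direct exchange closes this gap: any chain point whose incident chain edge is omitted can only be reattached through an edge leaving the chain, whose length exceeds by a large factor the minuscule inter-chain distance; hence no chain edge can actually be dropped without an increase in total weight. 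This fixes the contribution of all red and blue edges within the level region.

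Second, I would show that the cheapest way to connect the purple points $q_0,\dots,q_m$ to $r_0,\dots,r_m$ in \emph{both} colours is to use precisely the $m+1$ chords $q_i r_i$. Because chains alternate sides, at every $q_i$ the red and blue subgraphs of $G^\star$ each lack at least one incident chain, so some edge leaving $q_i$ across the circle is unavoidable. The candidates for such an edge are (i) the horizontal chord $q_i r_i$ of length $L_i:=||q_i r_i||$; (ii) a diagonal to $r_{i\pm 1}$; or (iii) a multi-edge detour through neighbouring levels. Using the construction's defining inequality $||q_i q_{i+1}||=\tfrac12 L_i+\eps$ together with $L_{i+1}<L_i$, a short trapezoid computation shows that any route avoiding the chord $q_i r_i$ pays at least $L_i+L_{i+1}+2\eps$, strictly worse than $L_i$. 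Analogous bounds (the chord-versus-arc inequality on $\cal C$) rule out the diagonals. Iterating this for each $i$ forces every horizontal chord into $G^\star$.

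The main obstacle is composing these local arguments into a global one: a single swap might create a cycle or disconnect a colour elsewhere. I plan to handle this by fixing the chain edges first (which is purely local, since distinct chains live in disjoint angular regions of the circle) and then inserting horizontal chords from $i=0$ downward, observing that after each insertion the residual subproblem is again a smaller nested level configuration of the same type. At the end of the induction, the proposed zig-zag tree spans both colours, its weight equals the sum of chain lengths plus $\sum_{i=0}^{m}L_i$, and any feasible alternative has been shown to be at least this heavy — establishing both optimality and uniqueness.
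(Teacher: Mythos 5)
Your first step---pinning down the chain edges via Lemma~\ref{lem_edgesmst}---matches the paper, which invokes the same lemma and tolerates one missing edge per chain at an arbitrarily small cost. The gap is in your second step. The claim that ``at every $q_i$ the red and blue subgraphs each lack at least one incident chain, so some edge leaving $q_i$ across the circle is unavoidable'' is not a valid local deduction: in the target solution itself an interior $q_i$ reaches the opposite arc in red through its red chain to a neighbouring $q$-point and then through that level's chord, so no cut local to $q_i$ forces a crossing edge at $q_i$. Concretely, there is a feasible competitor with a \emph{single} crossing edge: connect all $q_j$ (and all $r_j$) among themselves by the short purple edges $q_jq_{j+1}$, $r_jr_{j+1}$ of length $\frac12 L_j+\eps$ each, keep the chains, and add one chord $q_mr_m$. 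Its total weight is $\sum_{j=0}^{m}L_j+2m\eps$, only $2m\eps$ above the claimed optimum. So the necessity of all $m+1$ chords is a genuinely global fact with a razor-thin margin, and your cost bookkeeping is also off: deleting the chord $q_ir_i$ can be repaired by the pair $q_iq_{i+1}$, $r_ir_{i+1}$ at cost $L_i+2\eps$, not $L_i+L_{i+1}+2\eps$, so the slack you rely on is $2\eps$ rather than $L_{i+1}$.

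The second problem is that your list of alternatives (chord, adjacent diagonal, multi-edge detour) is not exhaustive, and your top-down recursion presupposes the decomposition it is meant to prove. A priori an optimal graph could contain a long purple edge $q_jr_{j'}$ with $|j-j'|$ large, or attach a purple point to a red or blue point far across the circle. The paper does not eliminate these by length comparisons; it first applies Corollary~\ref{cor:crosssings} (no red or blue edge crosses a purple--purple segment) and Lemma~\ref{lem:circle_deg_at_most_2} (at most one purple neighbour per semicircle) to reduce the bad case to a single purple edge crossing $q_ir_i$ from $q_{i-1}$ or $r_{i-1}$ to a lower purple point, and then performs a two-edge swap using $\|q_{i-1}q_i\|+\|r_{i-1}r_i\|>\|q_{i-1}r_{i-1}\|$. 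Its induction runs bottom-up and splits on whether the segment $q_ir_i$ is crossed; only in the uncrossed case does the instance decompose into independent subproblems, and the crossed case is where all the work happens. To repair your argument you need to import those two structural lemmas and this case split rather than ``observing'' that the residual subproblem is of the same type.
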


\begin{proof}
First we argue about one single level, defined by points $q_i$, $r_i$, $q_{i+1}$, $r_{i+1}$.
Refer to Figure~\ref{fig:MartiniGlass-Schematic} (right).
One of the properties of the construction is that $||q_i q_{i+1} || > \frac{1}{2} ||q_i r_i ||$, implying $||q_i q_{i+1} ||+||r_i r_{i+1} || > ||q_i r_i||$.
It is easy to verify that the shortest RBP spanning tree for one level is as claimed: it uses both horizontal purple edges and connects all red/blue points on the circular arcs.

To argue about all the levels, we apply an inductive argument bottom-up.
Suppose the claimed solution is optimal from level $m$ up to level $i$.
Now consider the problem from level $m$ up to one more level, i.e., $i-1$.
Assume that the points between $q_i$ and $q_{i-1}$ are blue, and the ones between  $r_i$ and $r_{i-1}$ are red.

Let $S'$ be the set containing $q_{i-1}$, $q_{i}$, and all blue points in between. By construction, the MST of $S'$ is a path whose endpoints are $q_{i-1}$ and $q_{i}$, and in particular $S'$ is blue-maximal. By Lemma~\ref{lem_edgesmst}, all of the edges of the path (except for at most one) will be present in any RBP spanning graph. Thus, we can connect $q_i$ and $q_{i-1}$ in blue by adding at most one edge whose length is arbitrarily small. Likewise, we can assume that $r_i$ and $r_{i-1}$ are connected through their intermediate red points. If the resulting solution does not need one of those edges, then we can remove it, but the total cost will only change by an arbitrarily small amount. Consider an optimal solution $G^*$ up to level $i-1$. We analyze two cases, depending on whether segment $q_i r_i$ is crossed by some edge in $G^*$.

If $q_i r_i$ is not crossed in $G^*$, then we can partition the problem into two independent subproblems: the trapezoid defined by the four purple points $q_i,r_i, q_{i-1}, r_{i-1}$, and the points located on or below the segment $q_ir_i$. Since in both sub-problems the solution has the desired shape, we derive that the union will also have the desired shape.

Thus, it remains to consider the case in which segment $q_i r_i$ is indeed crossed in $G^*$.
By Corollary~\ref{cor:crosssings}, it can only be crossed by a purple edge, which must connect $r_{i-1}$ or $q_{i-1}$ to some purple point below $q_i r_i$.
Assume without loss of generality that the crossing edge is of the form $\nu_j r_{i-1}$, where $\nu$ is either $q$ or $r$, and $j>i$ (see Figure~\ref{fig_lowerBound}). Recall that all points in the construction (other than $p_N$ and $p_S$) lie on a semicircle. Thus, by Lemma~\ref{lem:circle_deg_at_most_2}, $r_{i-1}$ cannot be adjacent to another purple point in $G^*$. Moreover, by Corollary~\ref{cor:crosssings}, $r_{i-1}$ cannot be adjacent to any red points other than those between $r_{i-1}$ and $r_{i}$, and cannot be adjacent to any blue point.

\begin{figure}[tb]
\centering
\includegraphics{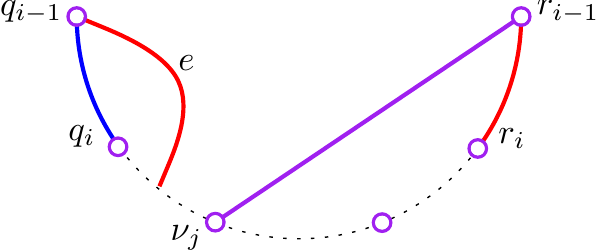}
\caption{If $r_{i-1}$ does not connect to $q_{i-1}$ in $G^*$, then $q_{i-1}$ must also be incident to a long edge $e$ that is useful only in the red tree. Thus, we can replace both edges by $q_{i-1}r_{i-1}$ and obtain a tree of smaller weight.}
\label{fig_lowerBound}
\end{figure}

In particular,
the path connecting $q_{i-1}$ with $r_{i-1}$ must use edge $\nu_jr_{i-1}$ on both red and blue trees of $G^*$. Since $G^*$ is connected in the red subgraph, there must be an edge $e$ in $G^*$ that connects $q_{i-1}$ to some other red vertex. In principle, the other vertex of $e$ could also be purple, but notice that such edge would only be needed for the red path (since we assumed that $q_{i-1}$ can connect to $q_i$ through the blue points in between). The nearest possible vertex is $q_{i}$, thus we conclude that the length of $e$ is at least $||q_{i-1} q_i||$. Similarly, we have $||\nu_j r_{i-1}|| > ||r_{i-1} r_i||$ by the way in which points $q_i$ and $r_i$ were placed.

Recall that one of the key properties of our construction is that  $||q_{i-1} q_i||+||r_{i-1} r_i|| > || q_{i-1} r_{i-1} ||$. Hence, if we replace those two edges by $q_{i-1} r_{i-1}$ we obtain a graph of smaller weight. Moreover, this change cannot affect the RBP spanning property: after we removed the edge $\nu_jr_{i-1}$, vertices $q_{i-1}$ and $r_{i-1}$ belong to different components in both red and blue subtrees. The new purple edge reintroduces the connectivity, thus we would obtain a RBP spanning graph of smaller weight (a contradiction with the fact that the edge $q_ir_i$ is not used in a RBP spanning graph). 
\end{proof}

%

To complete the proof of Proposition~\ref{prop:MartiniGlass}, we consider now the full construction, including $p_N$, $p_S$, and the red and blue points between them and $p_E$ and $q_m$, respectively.

%
\begin{proof} (of Proposition~\ref{prop:MartiniGlass})
Consider the complete construction, as shown in Figure~\ref{fig:MartiniGlass-Schematic} (left).
Recall that $\cal C$ is a unit disk.
We show below that the extra points do not affect the optimal solution for the central construction. Note that in the construction, we can choose the $y$-coordinates of $p_W$ and $p_E$ as low as desired provided that they are above $y(p_c)$. Moreover, notice that the cost of the main construction associated with  our solution (i.e. disregarding points $p_N$, $p_S$, and the two paths connecting these points) decreases monotonically as we lower the $y$-coordinate of $p_W$ and $p_E$. Thus, by placing $p_W$ and $p_E$ with a sufficiently low $y$-coordinate, the cost of our construction can become arbitrarily small.

Now consider the extra points: observe that the cheapest way to connect all red points in the path from $p_N$ to $p_E$ is to form a (red) path between them. Likewise, we obtain the same result for the blue points from $p_S$ to $q_m$. The cost of each of these paths is roughly 1 (slightly larger due to $\eps_0$).
Since we placed all points on the lower semicircle of $C$, outside ${\cal C}_0$, the purple points with highest $y$-coordinates, $p_W$ and $p_E$, are the ones closest to $p_N$.
Therefore $p_N$ must be incident to an edge of length at least $|| p_N p_E ||$ (or $p_N$ will not be connected in the blue subgraph). Further, notice that $p_N$ must be a leaf in the blue subgraph (and $p_S$ in the red subgraph). This is due to the fact that all other points are at distance less than 1 from each other, thus if $p_N$ had degree two or more, one of the adjacencies could be replaced to obtain a graph of smaller length.

Recall that we placed our purple points so that $||p_N\nu || + ||p_S\eta ||  >  ||p_N p_S ||$ for any two purple points $\eta$, $\nu$, thus the edge $p_Np_S$ is shorter than any other combination of edges emanating from $p_N$ and $p_S$.
Any solution that adds extra edges to $p_N$ cannot be of minimal weight, since has extra purple edges.
Now we apply Lemma~\ref{lem_edgesmst} to the set containing $p_E$, $p_N$, and all red points in between (similarly, to $q_m$, $p_S$,  and all blue points in between). These sets are red- and blue-maximal, respectively, hence by Lemma~\ref{lem_edgesmst} any optimal solution must connect them in red and blue paths, respectively.
As before, at most one edge may be missing in each of the paths; for simplicity in the explanation, we keep such edge, since this can only create an arbitrarily small increase in the total cost.
That is, $p_N$ and $p_S$ must have exactly one additional edge (if the red and blue paths are connected) or two edges (otherwise), and the cheapest way of doing so is by connecting $p_N$ to $p_S$ directly.
This construction has minimal weight, and is RBP spanning, thus the statement is shown.
\end{proof}

\end{document}